\newtheorem{theorem}{Theorem}
\newtheorem{lemma}{Lemma}
\newtheorem{definition}{Definition}
\def\pathfour#1#2#3#4{\overline{#1 #2 #3 #4}}
\newcommand{\compilehidecomments}{false}
	\newcommand{\wei}[1]{}
	\newcommand{\yuan}[1]{}
	\newcommend{\del}[1]{}
\newcommand{\wei}[1]{{\color{blue!50!black}  [\text{Wei:} #1]}}
\newcommand{\yuan}[1]{{\color{brown!60!black} [\text{Yuan:} #1]}}
\newcommand{\del}[1]{{\color{red!60!black}
[\textbf{Del:}#1]}}
\newcommand{\OnlyInFull}[1]{#1}
\newcommand{\OnlyInShort}[1]{}
\def\@copyrightspace{\relax}
\begin{document}

\sloppy






%

\title{Assessing Percolation Threshold Based on High-Order Non-Backtracking Matrices}



%
%
%
\author{Yuan Lin$^{\dag\ddag}$, Wei Chen$^\S$, and Zhongzhi Zhang$^{\dag\ddag}$ \\
    $^\dag$School of Computer Science, Fudan University, Shanghai, China \\
    $^\ddag$Shanghai Key Laboratory of Intelligent Information Processing, Fudan University, Shanghai, China \\
    $^\S$Microsoft Research, Beijing, China\\
    {yuanlin12@fudan.edu.cn, weic@microsoft.com, zhangzz@fudan.edu.cn}}

\maketitle
\begin{abstract}
Percolation threshold of a network is the critical value such that when nodes or edges are
	randomly selected with probability below the value, the network is fragmented but when the probability
	is above the value, a giant component connecting a large portion of the network would emerge.
Assessing the percolation threshold of networks has wide applications in network reliability, information spread,
	epidemic control, etc.
The theoretical approach so far to assess the percolation threshold is mainly based on spectral radius of
	adjacency matrix or non-backtracking matrix, which is limited to dense graphs
	or locally treelike graphs, and is less effective for sparse networks with non-negligible 
	amount of triangles and loops. 
In this paper, we study high-order non-backtracking matrices and their application to assessing percolation threshold.
We first define high-order non-backtracking matrices and study the properties of their spectral radii.
Then we focus on the 2nd-order non-backtracking matrix and demonstrate analytically that the reciprocal
	of its spectral radius gives a tighter lower
	bound than those of adjacency and standard non-backtracking matrices.
We further build a smaller size matrix with the same largest eigenvalue as the 2nd-order non-backtracking matrix
	to improve computation efficiency.
Finally, we use both synthetic networks and 42 real networks to illustrate that the use of 
	the 2nd-order non-backtracking
	matrix does give better lower bound for assessing percolation threshold than adjacency and standard
	non-backtracking matrices.
%
\end{abstract}

%
%


%
%

%
%



\keywords{percolation theory; percolation threshold; non-backtracking matrix; 
	high-order non-backtracking matrix; information and influence diffusion}

\section{Introduction}

Percolation theory is a powerful statistical physics tool to describe information and viral spreading in social environment~\cite{Wa02, KeKlTa03, JiMiYa14}, robustness and fragility of infrastructural or technological networks~\cite{CaNeSt00,GaScWa11, ChChCh12}, among other things, and thus its impact
	spreads well beyond statistical physics and reaches computer science, network science and
	related areas.
Percolation is a random process independently occupying sites (a.k.a. nodes) or bonds (a.k.a. edges) in a network with probability $p$
	(with probability $1-p$ the site or bond is removed). 
In particular, the bond percolation can be identified to a special case of independent cascade model~\cite{Ne02,KeKlTa03}. 
In this paper, our discussion focuses on bond percolation, although similar approach applies to site percolation as well.
As probability $p$ increases from $0$ to $1$, the network is expected to experience a phase transition from a 
	large number of small connected components to the emergence of a giant connected component, with the size proportional 
	to the size of the network.
The value $p$ at this transition point is referred to as the {\em percolation threshold}.

The percolation threshold can be used to assess the spreading power of a network from a topological 
	point of view.
For example, in studying epidemics in a social network, a small percolation threshold means 
	that a virus is easy to spread in the network
	and infects a large portion of the network.
Thus, many applications rely on the concept of percolation threshold, such as finding influential nodes in a social 
	network~\cite{MoMa15, LeScVa14}, facilitating or curbing propagations~\cite{ChToPr16,ChToPr16TKDD}, and determining transmission rates in
	wireless networking~\cite{FrDoTh07, AnShHe08}.
Therefore, it is crucial to have an accurate understanding of the percolation threshold in networks.

However, since the exact percolation threshold for a general network 
	is analytically difficult to obtain, these applications rely on some theoretical estimates
	of the actual percolation threshold.
A commonly used theoretical estimate is the reciprocal of the largest eigenvalue $\lambda_A$ 
	(same as the spectral radius) of the network's adjacency matrix $A$ (e.g. in~\cite{WaChWa03,BoBoCh10,ChToPr16,ChToPr16TKDD}).
In particular, Bollob\'as \emph{et al}. show that this estimate is accurate for dense networks~\cite{BoBoCh10}.
However, in sparse networks, it could be far off.
For example, in a ring network (Fig.~\ref{fig:ring1}), the real percolation threshold is $1$, but its adjacency matrix has 
	the largest eigenvalue $2$, predicting the percolation threshold is $0.5$.


\begin{figure}[t]
	\centering
	\captionsetup[subfigure]{font=scriptsize,oneside,margin={0.0cm,0.0cm}}
	\setcounter{subfigure}{0}%
	\subfloat[\textrm{ring network}]
	{\includegraphics[width=0.48\linewidth,trim= 50 50 30 0]{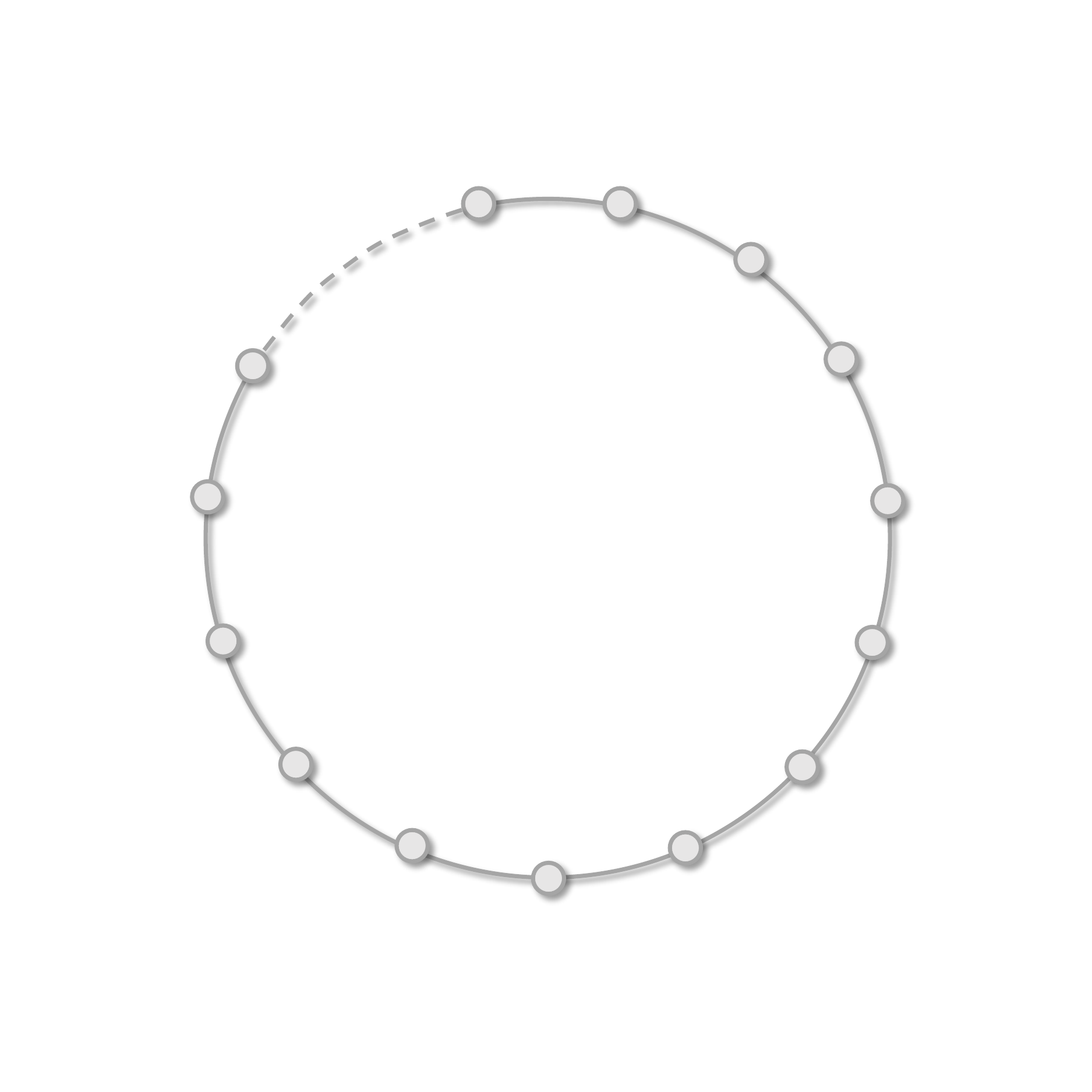}\label{fig:ring1}}
	\subfloat[\textrm{triangle ring network}]
	{\includegraphics[width=0.48\linewidth,trim= 50 50 50 0]{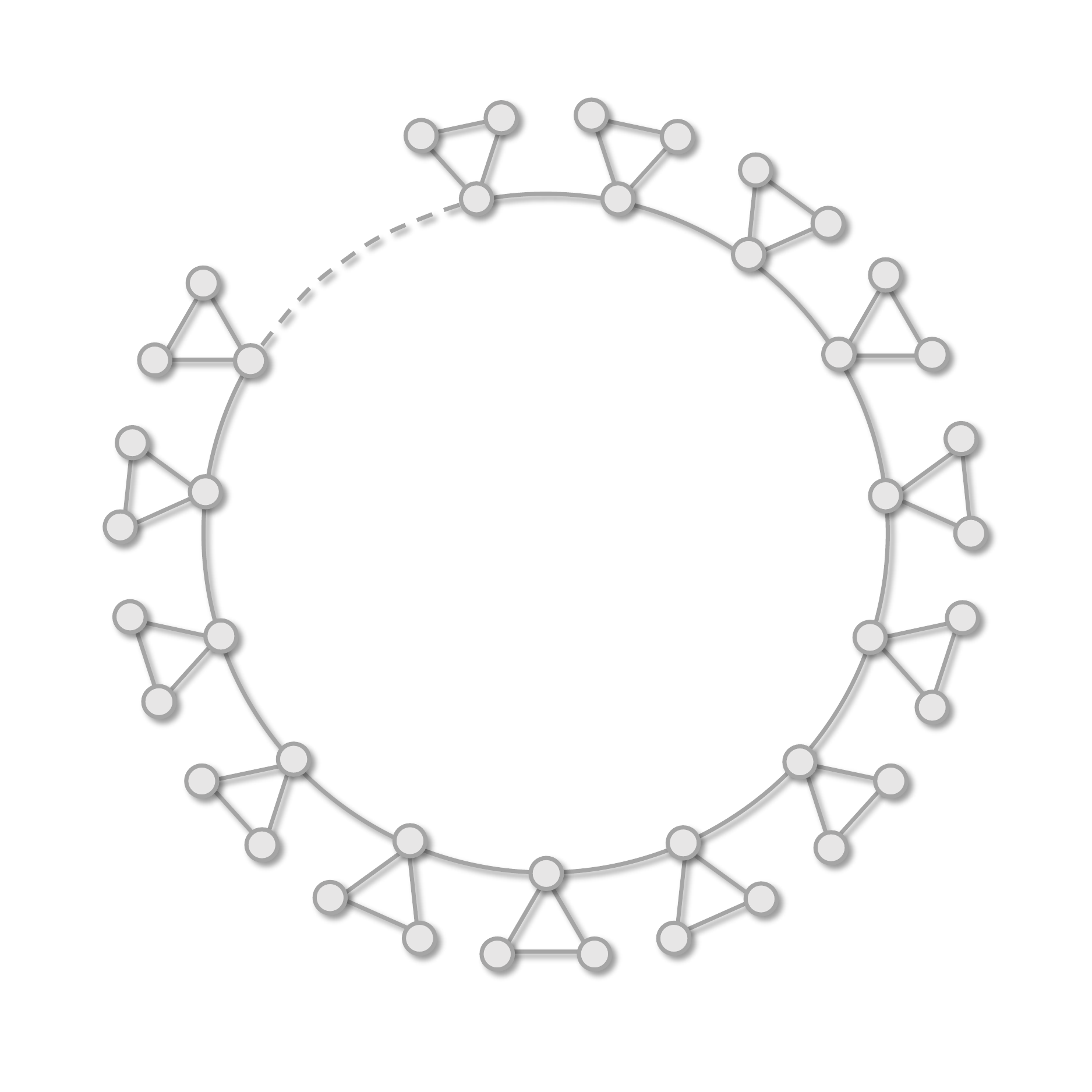}\label{fig:ring2}}
	\caption{Illustrations of the ring network and the triangle ring network.}
	\label{fig:ring}
\end{figure}

Recently, the reciprocal of the largest eigenvalue of the network's non-backtracking matrix~\cite{KaNeZd14,HaPr14} is 
	introduced as a better theoretical estimate of the percolation threshold.
We defer its technical definition to Section~\ref{sec:prelim}, and give
	an intuitive explanation on the issue of the adjacency matrix that is addressed by the non-backtracking matrix.
Let $\pi$ be a vector, with $\pi_i$ representing the probability that node $i$ is in the giant connected component, when every
	edge has a probability of $p$ to be occupied.
Let $A$ be the adjacency matrix of the graph, and ${\cal N}_i$ be the set of neighbors of $i$. The $i$-th entry of $pA\pi$ satisfies $[pA\pi]_i = \sum_{j\in {\cal N}_i} p \pi_j$, which approximately represents that $i$ could connect to the giant component through one of its neighbors $j$ (with probability $p\pi_j$). Thus it should be the same as $\pi_i$, and in matrix form, we have $p A \pi = \pi$.
This suggests that $1/p$ should be at most $\lambda_A$, i.e. $1/\lambda_A$ is a lower bound of the percolation threshold.
However, for every edge $(i,j)$ in the network, the above approximation considers both that $i$ may rely on $j$ to connect to the giant 
	component and $j$ may rely on $i$, but the two cases should not jointly occur.
Therefore, the estimate by $1/\lambda_A$ inflates the probability that a giant component emerges and underestimates the percolation threshold.
The non-backtracking matrix  addresses this issue by disallowing such directed circular dependency between a pair of nodes on an edge.

However, non-backtracking matrix only works well in locally treelike graphs. In non-treelike graphs, it provides a lower bound, bot not as close as the true percolation threshold. As suggested by the famous small-world network study~\cite{WaSt98}, a significant amount of triangles exist in many real networks.
The non-backtracking matrix does not eliminate the circular dependency through triangles or other local structures, 
	and thus it may still underestimate
	the percolation threshold.
For example, in the triangle ring network in Fig.~\ref{fig:ring2}, the real percolation threshold is $1$, but the theoretical
	 estimates by the adjacency and non-backtracking matrices are $0.3333$ and $0.5523$, respectively.

In this paper, we extend the idea of the non-backtracking matrix to define high-order non-backtracking matrices.
We first define the high-order non-backtracking matrices and study the evolution of their largest eigenvalue with respect to order
	(Section~\ref{sec:highNBT}).
Then, we propose that the reciprocal of the largest eigenvalue of the 2nd-order non-backtracking matrix can provide a better estimate for 
	the percolation threshold in an arbitrary network, because it eliminates the circular dependency from triangles (Section~\ref{sec:highNBT}).
We further provide an alternative matrix sharing
	the same largest eigenvalue but with substantially smaller size to improve computation efficiency.
Finally, we conduct extensively experiments on both the forest fire model and 42 real networks to demonstrate the effectiveness of our method
	(Section~\ref{sec:exp}).

To summarize, our contributions include: 
	(i) proposing the high-order non-backtracking matrices and studying
	their eigenvalue properties; 
	(ii) establishing a more precise theoretical estimation for bond percolation threshold and finding a faster approach to evaluate it; and 
	(iii) supporting our analysis by empirical evaluations on synthetic and real networks.


\subsection{Related Work}

For random degree-uncorrelated network models, the percolation threshold can be approximated as $\langle d\rangle/(\langle d^2\rangle-\langle d\rangle)$~\cite{CoErBe00, CaNeSt00}. Here, $\langle d\rangle$ and $\langle d^2\rangle$ are separately the first and the second moments of the degree distribution. This estimation is less predictive for real networks where degree correlations appear.

Bollob\'as \emph{et al}.~\cite{BoBoCh10} show that the percolation threshold of dense graph is reciprocal of the largest eigenvalue of the adjacency matrix. However, the conclusion requires restrictive conditions for the networks. Especially for sparse networks, this estimation can be only regarded as providing a lower bound for the true percolation threshold.

Since a lot of realistic networks are sparse, Karrer \emph{et al}.~\cite{KaNeZd14} and Hamilton \emph{et al}.~\cite{HaPr14} 
	simultaneously propose that the reciprocal of the largest eigenvalue of the non-backtracking matrix is a tighter lower
	bound for bond and site percolation threshold, respectively, on sparse networks. 
This prediction is based on a message passing technique and obtained by heuristic equations and approximations on locally treelike structures.
Radicchi~\cite{Ra15Nat} further presents a mapping between the site and bond percolation to mathematically verify the predicted bond and site percolation thresholds are identical in this method. 
Although the estimation based on the non-backtracking matrix is more precise than that based on the adjacency matrix, 
	it is still not close enough to the true percolation threshold on many real networks~\cite{Ra15}, since it suffers from the limitation of the treelike assumption. 
Radicchi \emph{et al}.~\cite{RaCa16} also derives an alternative matrix of the non-backtracking matrix based on triangle
	elimination to improve the estimate for the
	{\em site} percolation threshold. However, this alternative matrix would overshoot on bond percolation. That means this estimate may overestimate the percolation threshold, leading it no longer provides a lower bound for the bond percolation threshold.

\section{Preliminaries} \label{sec:prelim}

We consider a finite connected undirected graph $\mathcal{G}=(\mathcal{V}, \mathcal{E})$ 
	with $N$ nodes and $E$ edges, where $\mathcal{V}=\{1, 2, \cdots, N\}$ is the set of $N$ nodes and $\mathcal{E}=\{(i, j) \mid i, j\in\mathcal{V}\}$ is the set of $E$ edges. The connectivity between nodes in $\mathcal{G}$ is described by an adjacency matrix ${ A}$, in which the element $a_{ij}=1$ if $(i, j)\in\mathcal{E}$, and $a_{ij}=0$ otherwise. 
We assume that $\cal G$ has no self-loops, i.e. $a_{ii}=0$ for all $i\in {\cal V}$.
Henceforth, the adjacency and non-backtracking matrices are all refer to this graph $\cal G$, when the context is clear.

The bond percolation model is only controlled by one parameter, bond occupation probability $p$. That means, in a network $\mathcal{G}$, each edge is independently occupied with probability $p$. The percolation clusters are sets of nodes connected only by occupied edges. For $p=0$, no edge is occupied so that there are $N$ isolated clusters of size one. For $p=1$, all edges are occupied and all nodes compose a single cluster of size $N$. 
At intermediate values of $p$, the network undergoes two different phases: the non-percolating phase, where all clusters have microscopic size; the percolating phase, where a single macroscopic cluster
	(the giant component), whose size is comparable to the entire network, is present. 
The percolation threshold $p_c$ is the value above which the giant cluster appears; below $p_c$ there are only small clusters.



The transition can be monitored through two primary quantities of interest, the relative sizes of the first and second largest clusters with respect to the size of the network, denoted by $S_1(p)$ and $S_2(p)$, respectively. 
In order to evaluate percolation threshold numerically, there are many different estimates for $p_c$ being proposed, such as the occupation probability corresponding to the maximal value of $S_2(p)$, the peak position of the empirical variance of $S_1(p)$ and the average size of clusters except the largest one~\cite{Ra15}. In this paper, we determine the best empirical estimate for percolation threshold as the value of $p$ where the second largest cluster reaches its maximum, namely
\begin{equation}\label{A0}
p_c=\arg\{\max_p S_2(p)\}.
\end{equation}
In the following, we will compare it with theoretical approaches to check their validity.

The first theoretical estimate is based on the adjacency matrix. Bollob\'as \emph{et al}.~\cite{BoBoCh10} show that for percolation on dense networks, the percolation threshold can be given by
\begin{equation*}
p_c=\frac{1}{\lambda_A},
\end{equation*}
where $\lambda_A$ is the largest eigenvalue of matrix ${ A}$, which is also the
	{\em spectral radius} of $A$.\footnote{In this paper, every matrix we consider is non-negative, and thus
		by the Perron-Frobenius Theorem~\cite{Me00}, its largest eigenvalue is a non-negative real number, in which case
		it is the same as its spectral radius, which is defined
		as the largest modulus of the (possibly complex) eigenvalues.}
Although this estimate provides a good prediction for percolation threshold on networks with high density of connections, it becomes less precise on sparse networks.

Since a lot of realistic networks are sparse, a new estimate based on non-backtracking matrix~\cite{Ha89, KrMoMo13} is proposed to give a better prediction. For any undirected network $\mathcal{G}$, we can transform it to a directed one through replacing each undirected edge $(i, j)$ by two directed ones $i\to j$ and $j\to i$. The non-backtracking matrix, denoted by ${ B}$, is a $2E\times 2E$ matrix with rows and columns indexed by directed edges $i\to j$ and entries are given by
\begin{equation*}\label{A4}
B_{i\to j, k\to l}=\delta_{jk}(1-\delta_{il}),
\end{equation*}
where $\delta_{ij}$ is the Kronecker Delta Function ($\delta_{ij}=1$ if $i = j$, and $0$ if $i\ne j$). 
In other words, for any two directed edges $i \to j, j\to k$,  $B_{i\to j, j\to k} = 1$ if 
	$k\ne i$, and thus $B_{i\to j, j\to k}$ records the walk from $i$ to $j$ and continuing to $k$ but
	not backtracking to $i$.
Under the locally treelike assumption, which means that the local neighborhood of a node is close to a tree
	without redundant paths, the reciprocal of the largest eigenvalue of ${ B}$ provides a close lower bound for sparse networks~\cite{KaNeZd14, HaPr14}, namely
\begin{equation*}\label{A5}
p_c \ge \frac{1}{\lambda_B}, \mbox{and } p_c \approx \frac{1}{\lambda_B}.
\end{equation*}


As already mentioned in the introduction, the non-backtracking matrix relies on the approximation of the locally treelike structure and
	it does not eliminate triangle dependency, and thus it may not give a tight lower bound of the bond percolation threshold.
In the following, we propose a more powerful tool, {\em high-order non-backtracking matrices}, to better estimate bond percolation threshold.


\section{High-Order Non-Backtracking \\ Matrices} \label{sec:highNBT}

In this section, we first introduce the definition of high-order non-backtracking matrices, 
	and then investigate their properties.


In a network $\mathcal{G}$, let $i_1\to i_2\to\cdots \to i_{g+1}$ stand for a {\em length-$g$ directed path}
	composed by $g+1$ {\em different} nodes $i_1, i_2, \cdots, i_{g+1}$, obeying $a_{i_{k}i_{k+1}}=1$ ($1\leq k\leq g$). For the sake of saving space, we use $\pathfour{i_1}{i_2}{\cdots}{i_{g+1}}$ to stand for length-$g$ directed path in place of $i_1\to i_2\to\cdots\to i_{g+1}$. Let $P_g$ be the number of length-$g$ directed paths in $\mathcal{G}$. 
Then, we define the {\em $g$-th-order non-backtracking matrix} as follows:

\begin{definition}
The $g$-th-order non-backtracking matrix ${ B}^{(g)}$ is a $P_g\times P_g$ matrix with rows and columns indexed by length-$g$ directed paths. The elements in ${ B}^{(g)}$ are
\begin{equation} \label{eq:HONBMdef}
B^{(g)}_{\pathfour{i_1}{i_2}{\cdots}{i_{g+1}}, \overline{j_1 j_2\cdots j_{g+1}}}=
	a_{i_1 j_1}(1-\delta_{i_1 j_{g+1}})\prod_{k=2}^{g+1}\delta_{i_k j_{k-1}}.
\end{equation}

\end{definition}
In other words, for each length-$(g+1)$ directed path, $\pathfour{i_1}{i_2}{\cdots}{i_{g+2}}$, we have $B^{(g)}_{\pathfour{i_1}{i_2}{\cdots}{i_{g+1}}, \pathfour{i_2}{i_3}{\cdots}{i_{g+2}}} = 1$; and all other entries of $B^{(g)}$ have value $0$.
For example, if $\pathfour{1}{2}{3}{}$, $\pathfour{2}{3}{4}{}$, $\pathfour{3}{4}{5}{}$, and $\pathfour{2}{3}{1}{}$
	are all length-$2$ directed paths in a graph,  then 
	${B}^{(2)}_{\pathfour{1}{2}{3}{},\pathfour{2}{3}{4}{}}=1$, while
	${B}^{(2)}_{\pathfour{1}{2}{3}{},\pathfour{3}{4}{5}{}}= {B}^{(2)}_{\pathfour{1}{2}{3}{},\pathfour{2}{3}{1}{}} = 0$.
We can regard matrix ${B}^{(g)}$ as encoding the relations between length-$g$ directed paths in $\mathcal{G}$. It also describes a kind of non-backtracking walks with memory, avoiding going back to a node visited in the recent $g$ steps. 
It is easy to verify that the standard non-backtracking matrix is $B={B}^{(1)}$, 
	and adjacency matrix $A = {B}^{(0)}$ ($a_{i_1 j_1}$ in Eq.~\eqref{eq:HONBMdef} is explicitly for
	the case of $g=0$). 


Let $\lambda_B^{(g)}$ be the largest eigenvalue of ${B}^{(g)}$. 
By the Perron-Frobenius Theorem~\cite{Me00}, $\lambda_B^{(g)}$ is real and non-negative, and it is zero only when
	the directed graph with the adjacency matrix equal to ${B}^{(g)}$ is a directed acyclic graph (DAG).
We now study the properties of $\lambda_B^{(g)}$ as $g$ changes.
A {\em simple cycle} in graph $\mathcal{G}$ is a closed path with no repetitions of nodes and directed edges allowed, except the repetition of the starting and ending nodes. To maintain a consistent terminology, henceforth, all {\em cycle} means a simple cycle. The result is given in the following theorem:

\begin{theorem}\label{T0}
The $g$-th-order non-backtracking matrix ${B}^{(g)}$ satisfies the following properties:

\emph{(i)} The largest eigenvalue of ${B}^{(g)}$ is non-increasing with respect to $g$, i.e., for every $g=1, 2, 3, \cdots$, $\lambda_B^{(g-1)}\geq\lambda_B^{(g)}$.

\emph{(ii)} Let $\ell$ be the length of the longest simple cycle in graph $\mathcal{G}$. 
Then $\lambda_B^{(g)} = 0$ for all $g\geq \ell-1$.

\emph{(iii)} For any $\ell\ge 3$, if there is no simple cycle with length $\ell$ in $\cal G$, 
then $\lambda_B^{(\ell-2)} = \lambda_B^{(\ell-1)}$.
\end{theorem}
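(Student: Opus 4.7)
The plan is to base all three parts on a single combinatorial interpretation. By definition~\eqref{eq:HONBMdef}, a chain of valid transitions $P_0 \to P_1 \to \cdots \to P_k$ in the digraph underlying $B^{(g)}$ corresponds to a vertex sequence $v_0, v_1, \ldots, v_{k+g}$ of $\mathcal{G}$ in which every window $v_t, v_{t+1}, \ldots, v_{t+g+1}$ of $g+2$ consecutive vertices is pairwise distinct, with $P_i = \overline{v_i v_{i+1} \cdots v_{i+g}}$; if the chain closes, so that $P_k = P_0$, the sequence is periodic with period $k$. Hence closed walks of length $k$ in $B^{(g)}$ are in bijection with ordered cyclic vertex tuples $(u_0, u_1, \ldots, u_{k-1})$ in $\mathcal{G}$ whose consecutive entries are joined by edges and in which every $g+2$ cyclically consecutive entries are pairwise distinct. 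This bijection identifies $\mathrm{tr}\bigl((B^{(g)})^k\bigr)$ with the number of such tuples.

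For part~(i), I would build a Perron-subharmonic vector for $B^{(g-1)}$. Let $x$ be the Perron eigenvector of $B^{(g)}$, and for each length-$(g-1)$ path $R = \overline{i_1 \cdots i_g}$ set $y_R := \sum_{v} x_{\overline{i_1 \cdots i_g\, v}}$, summed over all $v$ that extend $R$ to a valid length-$g$ path. Expanding both $[B^{(g-1)} y]_R$ and $\lambda_B^{(g)} y_R$ produces double sums over the same pair $(i_{g+1}, i_{g+2})$ with identical constraints on $i_{g+1}$, but the $B^{(g-1)}$ expansion additionally permits $i_{g+2} = i_1$ whereas the $B^{(g)}$-eigenvalue expansion forbids it by non-backtracking. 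Thus $B^{(g-1)} y \geq \lambda_B^{(g)} y$ coordinatewise with $y$ nonnegative and nonzero, and the Collatz--Wielandt form of the Perron--Frobenius theorem (applied to the non-negative matrix $B^{(g-1)}$) delivers $\lambda_B^{(g-1)} \geq \lambda_B^{(g)}$.

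For part~(ii), observe that $\lambda_B^{(g)} > 0$ is equivalent to $B^{(g)}$ being non-nilpotent, i.e.\ to the existence of a directed cycle in its digraph, which by the bijection yields a cyclic tuple $(u_0, \ldots, u_{k-1})$ in $\mathcal{G}$ obeying the distinctness constraint. Taking the minimum ``return'' $s^* := \min_i \min\{s > 0 : u_i = u_{(i+s) \bmod k}\}$ produces, by minimality of $s^*$, a simple cycle of length $s^*$ in $\mathcal{G}$; the distinctness constraint forces $s^* \geq g+2$, since any repeat $u_i = u_{i+s}$ with $s \leq g+1$ would place two equal entries inside a single window of $g+2$ consecutive positions. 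Hence $\ell \geq g+2$, i.e.\ $g \leq \ell - 2$, and the contrapositive is precisely $\lambda_B^{(g)} = 0$ for every $g \geq \ell - 1$.

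For part~(iii), the inequality $\lambda_B^{(\ell-2)} \geq \lambda_B^{(\ell-1)}$ is supplied by~(i), so I would prove the reverse by showing $\mathrm{tr}\bigl((B^{(\ell-2)})^k\bigr) = \mathrm{tr}\bigl((B^{(\ell-1)})^k\bigr)$ for every $k \geq 1$; since matching power sums of non-negative matrices force identical non-zero spectra (via Newton's identities), this gives equal spectral radii. Using the bijection, the left trace counts cyclic tuples with every $\ell$ cyclically consecutive entries distinct, while the right trace counts those with every $\ell+1$ cyclically consecutive distinct. Under the hypothesis these constraint sets coincide: if a tuple satisfies the weaker condition and some $u_t = u_{t+\ell}$, then $u_t, u_{t+1}, \ldots, u_{t+\ell - 1}$ are $\ell$ pairwise distinct vertices of $\mathcal{G}$ which, together with the edge $u_{t+\ell - 1} u_{t+\ell} = u_t$, form a simple cycle of length exactly $\ell$, contradicting the assumption. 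The main technical hurdle across the whole proof is pinning down the bijection between $B^{(g)}$-walks and constrained cyclic vertex tuples carefully enough that the trace identification in~(iii) is airtight; once that bookkeeping is in hand, each part reduces to a short combinatorial or Perron--Frobenius style argument.
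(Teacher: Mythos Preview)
Your proposal is correct and takes a genuinely different route from the paper's proof. The paper works structurally via line graphs: it shows (Lemma~\ref{Lem2}) that $\mathcal{G}^{(g)}$ is obtained from the line graph $\bar{\mathcal{G}}^{(g-1)}$ of $\mathcal{G}^{(g-1)}$ by deleting exactly the edges lying on length-$(g+1)$ cycles, then invokes the cited fact that a digraph and its line graph share the same nonzero spectrum together with the monotonicity of the spectral radius under edge deletion (Lemma~\ref{LEM4}). Part~(i) follows immediately, and part~(iii) reduces to observing that when $\mathcal{G}$ has no $\ell$-cycles there are no edges to delete, so $\mathcal{G}^{(\ell-1)}=\bar{\mathcal{G}}^{(\ell-2)}$ outright. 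Your argument instead stays at the level of $B^{(g)}$ itself: for (i) you push the Perron vector of $B^{(g)}$ down to a subharmonic vector for $B^{(g-1)}$ and apply Collatz--Wielandt, and for (iii) you count closed walks via traces and use the absence of $\ell$-cycles to match the two families of constrained cyclic tuples. Part~(ii) is essentially the same argument in both write-ups.

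What each buys: the paper's line-graph viewpoint makes the \emph{reason} for equality in (iii) transparent---the two digraphs are literally the same object---and the inequality in (i) is just ``remove edges, shrink $\lambda$,'' but it imports an external spectral fact about line digraphs. Your approach is more self-contained (Collatz--Wielandt and Newton's identities are textbook tools) and the trace argument in (iii) is pleasantly uniform across all $k$; on the other hand it is slightly less informative structurally, since it shows the nonzero spectra coincide without exhibiting the graph isomorphism that causes this. One small point worth making explicit in your write-up: when $\lambda_B^{(g)}=0$ there is no Perron eigenvector to push down, but then $\lambda_B^{(g-1)}\ge 0$ is automatic, so the case is trivial.
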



In order to prove Theorem~\ref{T0}, we first present some related notations and lemmata. 
The $g$-th-order non-backtracking matrix ${B}^{(g)}$ of network $\mathcal{G}$ can be 
	regarded as the adjacency matrix of a directed network $\mathcal{G}^{(g)}=(\mathcal{V}^{(g)}, \mathcal{E}^{(g)})$, where $\mathcal{V}^{(g)}=\{u\mid u $ is a length-$g$ directed path$\}$, and $\mathcal{E}^{(g)}=\{u \to v \mid u, v$ are two length-$g$ directed paths and $B^{(g)}_{u, v}=1\}$. 
Note that when $g=0$, $\mathcal{G}^{(0)}=\mathcal{G}$ (considering $\mathcal{G}$ as a symmetric
	directed graph). 
The line graph of $\mathcal{G}^{(g)}$ is $\bar{\mathcal{G}}^{(g)}=(\bar{\mathcal{V}}^{(g)}, \bar{\mathcal{E}}^{(g)})$, where $\bar{\mathcal{V}}^{(g)}=\mathcal{E}^{(g)}$ and $\bar{\mathcal{E}}^{(g)}=\{\alpha \to \beta \mid \alpha=(u, v)$ and $\beta=(v, w)$ are two different edges $u\to v$ and $v\to w$ in $\mathcal{E}^{(g)}.\}$. Let $\bar{B}^{(g)}$ denote the adjacency matrix of the line graph $\bar{\mathcal{G}}^{(g)}$, with elements
\begin{eqnarray*}
\bar{B}^{(g)}_{u\to v, p\to q}
=\delta_{vp},
\end{eqnarray*}
where $u$, $v$, $p$ and $q$ are length-$g$ directed paths with
	$u\to v, p \to q \in {\cal E}^{(g)}$. In other words, we can recast definition of $\bar{B}^{(g)}$ as 
\begin{eqnarray*}
\bar{B}^{(g)}_{\pathfour{i_1}{i_2}{\cdots}{i_{g+2}}, \pathfour{j_1}{j_2}{\cdots}{j_{g+2}}}=\prod_{k=2}^{g+2}\delta_{i_k j_{k-1}}.
\end{eqnarray*}

\begin{lemma}\label{Lem2}
$\mathcal{G}^{(g)}$ can be obtained by deleting all edges in all length-$(g+1)$ cycles in $\bar{\mathcal{G}}^{(g-1)}$.
\end{lemma}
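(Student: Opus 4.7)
The plan is to identify the vertex set of $\bar{\mathcal{G}}^{(g-1)}$ with $\mathcal{V}^{(g)}$, then compare $\bar{\mathcal{E}}^{(g-1)}$ and $\mathcal{E}^{(g)}$ as subsets of this common vertex set, and finally show that the surplus $\bar{\mathcal{E}}^{(g-1)}\setminus\mathcal{E}^{(g)}$ is exactly the set of edges lying on length-$(g+1)$ cycles of $\bar{\mathcal{G}}^{(g-1)}$.

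First I would verify that $\bar{\mathcal{V}}^{(g-1)}=\mathcal{E}^{(g-1)}$ is naturally in bijection with $\mathcal{V}^{(g)}$. If $u\to v$ is an edge in $\mathcal{G}^{(g-1)}$ with $u=\pathfour{i_1}{i_2}{\cdots}{i_g}$ and $v=\pathfour{i_2}{\cdots}{i_g}{i_{g+1}}$, the non-backtracking constraint encoded in $B^{(g-1)}$ forces $i_1\neq i_{g+1}$, which together with the distinctness of the nodes inside each of $u$ and $v$ makes $i_1,\ldots,i_{g+1}$ pairwise distinct. This is exactly the condition for $\pathfour{i_1}{i_2}{\cdots}{i_{g+1}}$ to be a length-$g$ directed path, and the correspondence is invertible.

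Next I would expand the two matrix definitions side by side. For $\alpha=\pathfour{i_1}{i_2}{\cdots}{i_{g+1}}$ and $\beta=\pathfour{i_2}{\cdots}{i_{g+1}}{i_{g+2}}$ in $\mathcal{V}^{(g)}$, reading off $\bar{B}^{(g-1)}_{\alpha,\beta}=1$ is automatic, whereas $B^{(g)}_{\alpha,\beta}=1$ additionally demands $i_1\neq i_{g+2}$. Hence the surplus $\bar{\mathcal{E}}^{(g-1)}\setminus\mathcal{E}^{(g)}$ consists precisely of edges of the form $\pathfour{i_1}{i_2}{\cdots}{i_{g+1}}\to \pathfour{i_2}{\cdots}{i_{g+1}}{i_1}$, each encoding a length-$(g+1)$ simple cycle $i_1\to i_2\to\cdots\to i_{g+1}\to i_1$ of $\mathcal{G}$.

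Finally I would establish that these surplus edges are exactly the edges lying on length-$(g+1)$ cycles of $\bar{\mathcal{G}}^{(g-1)}$. The easy direction: given a length-$(g+1)$ simple cycle $i_1\to\cdots\to i_{g+1}\to i_1$ in $\mathcal{G}$, its $g+1$ cyclic windows are pairwise distinct length-$g$ paths that form, in order, a length-$(g+1)$ cycle in $\bar{\mathcal{G}}^{(g-1)}$, every edge of which is of the surplus type. For the converse, given a length-$(g+1)$ cycle $v_1\to\cdots\to v_{g+1}\to v_1$ in $\bar{\mathcal{G}}^{(g-1)}$, I would parametrize $v_k=\pathfour{a_{k-1}}{a_k}{\cdots}{a_{k+g-1}}$ by a sequence $(a_j)_{j=0}^{2g}$, and use the closing edge $v_{g+1}\to v_1$ to derive the periodicity $a_{j+g+1}=a_j$ for $0\le j\le g-1$. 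The hard part will be this converse: carefully chasing the indices to see that the closing constraint really yields strict periodicity (not some weaker identification), and that the distinctness already imposed by each $v_k$ being a length-$g$ path rules out degenerate patterns, so that $a_0,\ldots,a_g$ are all distinct and form an honest simple length-$(g+1)$ cycle of $\mathcal{G}$ whose $g+1$ surplus edges match those of the given cycle.
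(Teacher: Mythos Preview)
Your proposal is correct and follows essentially the same route as the paper: identify $\bar{\mathcal{V}}^{(g-1)}$ with $\mathcal{V}^{(g)}$, observe that the surplus $\bar{\mathcal{E}}^{(g-1)}\setminus\mathcal{E}^{(g)}$ consists of the edges $\pathfour{i_1}{\cdots}{i_{g+1}}\to\pathfour{i_2}{\cdots}{i_{g+1}}{i_1}$, and then show these are exactly the edges on length-$(g+1)$ cycles of $\bar{\mathcal{G}}^{(g-1)}$. The only cosmetic difference is in the converse direction: the paper argues by contradiction (a non-surplus edge would introduce $g+2$ distinct original nodes, forcing the cycle to have length at least $g+2$), whereas you parametrize the cycle explicitly and read off the periodicity $a_{j+g+1}=a_j$; both arguments yield the same conclusion with comparable effort.
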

\begin{proof}
For each edge $u\to v$ ($u=\pathfour{i_1}{i_2}{\cdots}{i_g}$ and $v=\pathfour{i_2}{\cdots}{i_g}{i_{g+1}}$) in $\mathcal{G}^{(g-1)}$, it must correspond to a length-$g$ directed path $\pathfour{i_1}{\cdots}{i_g}{i_{g+1}}$ in $\mathcal{G}$. On the other hand, each length-$g$ directed path also corresponds to an edge in $\mathcal{G}^{(g-1)}$. 
Thus, $\mathcal{V}^{(g)}=\mathcal{E}^{(g-1)}=\bar{\mathcal{V}}^{(g-1)}$
	if we identify $\pathfour{i_1}{i_2}{\cdots}{i_g} \to \pathfour{i_2}{\cdots}{i_g}{i_{g+1}}$
	with $\pathfour{i_1}{\cdots}{i_g}{i_{g+1}}$,  i.e., 
the node set of $\bar{\mathcal{G}}^{(g-1)}$ is identical to
	the node set of $\mathcal{G}^{(g)}$. 
Moreover, for every edge in $\mathcal{G}^{(g)}$, $\pathfour{i_1}{i_2}{\cdots}{i_{g+1}}\to\pathfour{i_2}{\cdots}{i_{g+1}}{i_{g+2}}{}$, 
	it is also an edge in $\bar{\mathcal{G}}^{(g-1)}$, implying $\mathcal{E}^{(g)}\subseteq\bar{\mathcal{E}}^{(g-1)}$. 
For an arbitrary edge in $\bar{\mathcal{E}}^{(g-1)} \setminus \mathcal{E}^{(g)}$, it must have a form as $\pathfour{i_1}{i_2}{\cdots}{i_{g+1}}\to\pathfour{i_2}{\cdots}{i_{g+1}}{i_1}$, and there must be $g$ other edges with the same form together composing a length-$(g+1)$ cycle in $\bar{\mathcal{G}}^{(g-1)}$, namely $\pathfour{i_1}{i_2}{\cdots}{i_{g+1}}\to\pathfour{i_2}{\cdots}{i_{g+1}}{i_1}\to\cdots\to\pathfour{i_{g+1}}{i_1}{\cdots}{i_{g}}\to\pathfour{i_1}{i_2}{\cdots}{i_{g+1}}$. 
Moreover, for any length-$(g+1)$ cycle in $\bar{\mathcal{G}}^{(g-1)}$, it must be of the above form.
If not, then it involves an edge of form $\pathfour{i_1}{i_2}{\cdots}{i_{g+1}}\to\pathfour{i_2}{\cdots}{i_{g+1}}{i_{g+2}}$ with
		$i_{g+2}\ne i_1$, the cycle must be of length at least $g+2$, because we have $g+2$ different original nodes
		$i_1, \ldots, i_{g+2}$ in the cycle, and each edge in the cycle only removes one head node and adds one tail node, and thus
		it needs at least $g+2$ edges to go through every original node once and comes back to $\pathfour{i_1}{i_2}{\cdots}{i_{g+1}}$.
Together, we know that by exactly removing all edges in all length-$(g+1)$ cycles in $\bar{\mathcal{G}}^{(g-1)}$,
	we obtain $\mathcal{G}^{(g)}$.
\end{proof}

The next lemma is a known result saying that when we remove edges from a graph, the largest eigenvalue
	of the adjacency matrix will not increase.
\begin{lemma}[Proposition 3.1.1 of \cite{BrHa11}]\label{LEM4}
For the adjacency matrix $A$ of a graph $\mathcal{G}$, if $A'$ is a matrix obtained by replacing some
	of the $1$'s in $A$ with $0$, then we have
\begin{equation}
\lambda_{A'}\leq\lambda_A,
\end{equation}
where $\lambda_{A'}$ and $\lambda_A$ are the largest eigenvalue of $A'$ and $A$, respectively.
\end{lemma}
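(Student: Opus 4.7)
The plan is to combine an elementary entrywise comparison with the Perron-Frobenius machinery already invoked earlier in the paper. Since $A'$ is obtained from $A$ by replacing some ones with zeros, we have the entrywise inequality $0 \le A' \le A$, and the goal is to transfer this ordering to the respective largest eigenvalues.

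First, I would show by a straightforward induction on $k$ that $0 \le (A')^k \le A^k$ entrywise for every $k \ge 1$. The inductive step is immediate because each entry of a product of non-negative matrices is a sum of non-negative terms and hence monotone in each factor. Note that this step makes no use of symmetry or connectivity; it only uses non-negativity of both matrices.

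Next, I would fix a matrix norm that is monotone on the cone of non-negative matrices and submultiplicative. The simplest choice is the entrywise $1$-norm (the sum of all entries), although any induced operator norm works as well. The entrywise bound from the previous step then yields $\|(A')^k\| \le \|A^k\|$ for every $k$. Taking $k$-th roots and passing to the limit, Gelfand's formula $\rho(M)=\lim_{k\to\infty}\|M^k\|^{1/k}$ gives $\rho(A')\le\rho(A)$, where $\rho(\cdot)$ denotes spectral radius.

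Finally, since both $A$ and $A'$ are non-negative, the Perron-Frobenius theorem cited in the paper via~\cite{Me00} guarantees that their largest eigenvalues coincide with their spectral radii, so $\lambda_A=\rho(A)$ and $\lambda_{A'}=\rho(A')$. Combining the two inequalities yields $\lambda_{A'}\le\lambda_A$, as desired. I do not anticipate any real obstacle here: the result is a standard consequence of Perron-Frobenius theory for non-negative matrices, the argument requires no irreducibility, connectivity, or symmetry assumption on $A$, and the only mildly delicate choice is that of an appropriate matrix norm, which any monotone submultiplicative norm fulfills.
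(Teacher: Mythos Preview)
Your argument is correct: the entrywise inequality $0\le A'\le A$ propagates to all powers by induction, any monotone submultiplicative norm then gives $\|(A')^k\|\le\|A^k\|$, and Gelfand's formula converts this into $\rho(A')\le\rho(A)$, which by Perron--Frobenius coincides with the largest eigenvalue for non-negative matrices. The approach is valid in the generality actually needed in the paper, namely for adjacency matrices of \emph{directed} graphs (the matrices $B^{(g)}$ to which Lemma~\ref{LEM4} is applied are not symmetric).

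A direct comparison with the paper's own proof is not possible from the provided source: the paper cites the result from~\cite{BrHa11} and, in the full version, defers an independent proof to Appendix~\ref{App1}, which is not included here. It is worth noting, however, that the proof in Brouwer--Haemers (Proposition~3.1.1) is stated for undirected graphs and proceeds via the Rayleigh quotient $\lambda_A=\max_{x\neq 0} x^\top A x/x^\top x$, using that the Perron eigenvector is entrywise non-negative; that route relies on symmetry of $A$. Your Gelfand-formula argument avoids this restriction and therefore matches the paper's actual use case more directly, at the modest cost of invoking a limit rather than a one-line variational identity.
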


\OnlyInFull{
For convenience, we give an independent proof of Lemma~\ref{LEM4} in Appendix~\ref{App1}.
}

We are now ready to prove Theorem~\ref{T0}.



\begin{proof} [of Theorem~\ref{T0}]$ $

(i) For a directed graph $\mathcal{G}$ and its line graph, Pako{\'n}ski et al.~\cite{PaTaZy03} show that 
	their spectra excluding the possible eigenvalue zero are exactly the same.
Therefore, we have
\begin{equation*}
\lambda_B^{(g-1)}=\bar{\lambda}_B^{(g-1)}, 
\end{equation*}
where $\bar{\lambda}_B^{(g-1)}$ is the largest eigenvalue of matrix $\bar{B}^{(g-1)}$. 

According to Lemma~\ref{Lem2}, $\mathcal{G}^{(g)}$ have the same node set as $\bar{\mathcal{G}}^{(g-1)}$, and $\mathcal{G}^{(g)}$ can be obtained by removing edges in length-$(g+1)$ cycles in $\bar{\mathcal{G}}^{(g-1)}$. Thus, applying Lemma~\ref{LEM4}, we attain the conclusion that
\begin{equation*}
\lambda_B^{(g-1)}=\bar{\lambda}_B^{(g-1)}\geq \lambda_B^{(g)}.
\end{equation*}

(ii) 
We first claim that, $\mathcal{G}^{(g)}$ is a DAG for all $g\geq \ell-1$. 
If not, there are $k$ ($k\geq g+2$) length-$g$ directed paths in ${\cal G}^{(g)}$, 
	$\pathfour{i_1}{i_2}{\cdots}{i_{g+1}}$, $\pathfour{i_2}{i_3}{\cdots}{i_{g+2}}$, \dots, $\pathfour{i_k}{i_1}{\cdots}{i_g}$, constituting a cycle in $\mathcal{G}^{(g)}$.
Then there must be a length-$k$ cycle composed by $i_1, i_2, \cdots,i_{k}$ in $\mathcal{G}$. 
Since $g\geq \ell-1$, we have $k\geq\ell+1$, implying that there is at least one cycle in $\cal G$
	with length larger than $\ell$, which contradicts to the assumption that the length of the largest cycle in $\mathcal{G}$ is $\ell$. 
Thus, network $\mathcal{G}^{(g)}$ is a DAG.
It is easy to verify that the adjacency matrix of a DAG has the normal form where
	all block matrices on the diagonal are one-dimensional matrices with value $0$, and thus
	all its eigenvalues are $0$.
Therefore $\lambda_B^{(g)} = 0$ for all $g\geq \ell-1$.

(iii) Consider the line graph of ${\cal G}^{(\ell-2)}$, $\bar{\cal G}^{(\ell-2)}$.
By Lemma~\ref{Lem2}, graph ${\cal G}^{(\ell-1)}$ is obtained by removing all
	length-$\ell$ cycles in $\bar{\cal G}^{(\ell-2)}$.
Similar to the argument in the proof of  Lemma~\ref{Lem2}, every length-$\ell$ cycle
	in $\bar{\cal G}^{(\ell-2)}$ has the form 
	$\pathfour{i_1}{i_2}{\cdots}{i_\ell}$ $\to$ $\pathfour{i_2}{\cdots}{i_\ell}{i_1} \to$ 
	$\cdots \pathfour{i_\ell}{i_1}{\cdots}{i_{\ell-1}}$ $\to \pathfour{i_1}{i_2}{\cdots}{i_\ell}$.
Then every such cycle corresponds to a length-$\ell$ simple cycle in the original graph $\cal G$,
	$i_1 \to i_2 \to \cdots i_\ell \to i_1$.
By assumption $\cal G$ has no length-$\ell$ simple cycles, thus it implies that
	$\bar{\cal G}^{(\ell-2)}$ is the same as ${\cal G}^{(\ell-1)}$.
Since the line graph shares the same non-zero eigenvalues as the original graph~\cite{PaTaZy03},
	we know that $\lambda^{(\ell-1)} = \lambda^{(\ell-2)}$.
\end{proof}

\section{Estimating Percolation Threshold by 2nd-Order Non-Backtracking Matrix} \label{sec:2ndNBT}



In this section, we show that the reciprocal of the 
	largest eigenvalue of the 2nd-order non-backtracking matrix
	gives a lower bound on the bond percolation threshold.
According to Theorem~\ref{T0}, it implies that this lower bound is tighter than the previous proposed
	analytical lower bounds using the reciprocal of the adjacency matrix 
	or the standard non-backtracking matrix.
We then show that we can replace the 2nd-order non-backtracking matrix with a substantially smaller matrix
	sharing the same largest eigenvalue to improve computation efficiency.
	
\subsection{Derivation of Lower Bound} \label{sec:lowerbound}
	
Our derivation follows the message passing techniques proposed in \cite{KaNeZd14, HaPr14}, 
	which involves first-order analysis (ignoring higher-order terms) and 
	heuristic equations on locally treelike graphs.
Henceforth, we denote percolation threshold predicted by the 
	adjacency matrix, non-backtracking matrix and 2nd-order non-backtracking matrix as 
	$p_c^{(0)}$, $p_c^{(1)}$ and $p_c^{(2)}$, respectively.


Let $\pi_i$ be the probability that node $i$ belongs to the giant connected component, where the
	random events are that each edge in $\cal G$ is independently occupied with probability $p$.\footnote{Technically,
		a giant connected component is a component of size $\Theta(n)$, where $n$ is the size of the graph, and thus
		$\Theta(n)$ is only meaningful when we have a series of graphs with $n$ goes to infinity. The argument in this section
		does not take this techical route, and instead it follows the heuristic argument approach as in~\cite{KaNeZd14, HaPr14}.}
Let $\theta_{i\to j\to k}$ be the probability that node $i$ connects to the giant component following path $i\to j\to k$, where
	$i,j,k$ are all different nodes. 
Then, we can construct the following relation for bond percolation:
\begin{equation}\label{G1}
\pi_i=1-\prod_{\substack{j\to k\\ j\in\mathcal{N}_i, k\neq i}}(1-\theta_{i\to j\to k}),
\end{equation}
Then, the expected size of the giant component can be given by
\begin{equation*}\label{G2}
S_1(p)=\sum_{i=1}^{N}\pi_i.
\end{equation*}

In a finite-size network, there is a drastic change for $S_1(p)$ at the percolation threshold $p_c$. Next, we focus on predicting 
	the value of $p_c$.

According to the definition of quantity $\theta_{i\to j \to k}$, we can construct recursive relations, which is appropriate for
	locally treelike structures with triangles:
\begin{equation}\label{B5}
\theta_{i\to j\to k}=1-\prod_{\ell\in\mathcal{N}_k\setminus \{i,j\}}(1-p \cdot \theta_{j\to k\to \ell}).
\end{equation}
The above heuristic equation intuitively means that, node $i$ connects to the giant component
	through at least one of the paths $j\to k\to \ell$, and $i$ connects to $j$ with probability $p$
	while $j$ connects to the giant component through path $j\to k\to \ell$ with probability
	$\theta_{j\to k\to \ell}$.
The equation is approximately accurate when the local neighborhood of $i$ is close to a tree without
	redundant paths, except that we allow triangles such as $i\to j\to k \to i$, since the 
	equation requires $\ell \ne i$, excluding such triangles.
When we ignore $p^2$ and higher order terms in Eq.~\eqref{B5}, we obtain
%
\begin{eqnarray*}\label{B7}
\theta_{i\to j\to k}=p\sum_{\ell\in\mathcal{N}_k\setminus\{i, j\}}\theta_{j\to k\to \ell},
\end{eqnarray*}
which can be recast in matrix notation as
\begin{eqnarray*}\label{B8}
\theta=p{B}^{(2)}\theta,
\end{eqnarray*}
where $\theta$ is a vector in which elements indexed by length-2 directed paths. 
When $p\ge p_c$, percolation happens, and $\theta$ should be a non-negative vector with some strictly positive entries. (It holds when length of the largest cycle is at least 4.) But if $p_c < 1/\lambda^{(2)}_B$, it means we could have an eigenvalue of $1/p_c > \lambda^{(2)}_B$, violating the definition of
	$\lambda^{(2)}_B$.
Therefore, we have 
%
\begin{eqnarray*}\label{B9}
p_c \ge p_c^{(2)}=\frac{1}{\lambda^{(2)}_B}. 
\end{eqnarray*}




Theorem~\ref{T0} theoretically guarantees $p_c^{(2)}\geq p_c^{(1)} \geq p_c^{(0)} $, 
	so that $p_c^{(2)}$ provides a tighter lower bound for percolation threshold than the ones
	provided by the adjacency and standard non-backtracking matrices. 
Moreover, consider the example of triangle ring network shown in Figure~\ref{fig:ring2}, it is easy to obtain $\lambda_B^{(2)}=1$ and $p_c^{(2)}=1/\lambda_B^{(2)}=1$, which is consistent with the true percolation threshold. 
Compared with estimates $p_c^{(0)}$ and $p_c^{(1)}$, $p_c^{(2)}$ exhibits remarkable improvement in precision. 

The reason why $p_c^{(2)}$ provides better prediction of $p_c$ than $p_c^{(1)}$ can be heuristically explained as follows. 
In the estimation based on non-backtracking matrix, it only considers the probability node $i$ connects to the giant component through node $j$, which can be denoted as $\theta_{i\to j}$. 
If there is a triangle composed by nodes $i$, $j$ and $k$ in the network, 
	$\theta_{i\to j}$ grows as $\theta_{j\to k}$ increases. 
Similarly, $\theta_{j\to k}$ increases with $\theta_{k\to i}$, and
	$\theta_{k\to i}$ increases with $\theta_{i\to j}$. 
This creates a triangular dependency which artificially inflates the values of
	$\theta_{i\to j}$, $\theta_{j\to k}$, and $\theta_{k\to i}$, leading to a higher estimate
	of the probability of giant component emergence and a lower estimate on the percolation threshold.
When triangles are abundant, as evidenced by the small-world research on 
	many real-world networks~\cite{WaSt98}, using the standard non-backtracking matrix may still
	significantly underestimate the percolation threshold.
By using the 2nd-order non-backtracking matrix, we avoid the
	triangular  dependency, so that $p_c^{(2)}$ is more precise than $p_c^{(1)}$.


Although $p_c^{(2)}$ is a tighter lower bound, the size of ${B}^{(2)}$ is usually larger than ${B}$, leading to higher computation complexity. 
In the following we provide a further technique to tackle this problem.

\subsection{Improving Computation Efficiency}

In this subsection, we illustrate that we can transfer the task of computing eigenvalues of ${B}^{(2)}$ to calculating eigenvalues of a new matrix ${M}$, defined as
\begin{eqnarray}\label{D1}
{M}=\left(\begin{array}{cccc}
{B}^{(1)} & -\Delta{B}_2 & {D}_{\Delta}-{I} & {B}^{(1)} - \Delta{B}_1 \\
{I} & {0} & {0} & {0} \\
{0} & {I} & {0} & {0} \\
{0} & {0} & {I} & {0}\end{array}
\right).
\end{eqnarray}
Here, ${M}$ is a square matrix of dimension $8E\times 8E$ composed of $16$ $2E\times 2E$ blocks, whose size is evidently smaller than ${B}^{(2)}$. Matrix $\Delta{B}_1$ and $\Delta{B}_2$ have identical size with ${B}^{(1)}$ and their elements are defined as
\begin{eqnarray*}\label{D2}
(\Delta B_1)_{i\to j, k\to \ell}=\left\{\begin{aligned}
&1, \quad a_{i\ell}B^{(1)}_{i\to j, k\to \ell}=1,\\
&0, \quad {\rm otherwise.}
\end{aligned}
\right.
\end{eqnarray*}
and
\begin{eqnarray*}\label{D3}
(\Delta B_2)_{i\to j, k\to \ell}=\left\{\begin{aligned}
&1, \quad \delta_{i\ell}a_{ij}a_{jk}a_{k\ell}=1,\\
&0, \quad {\rm otherwise,}
\end{aligned}
\right.
\end{eqnarray*}
respectively. The matrix ${D}_{\Delta}$ is a $2E\times 2E$ diagonal matrix and its element $(D_{\Delta})_{i\to j, i\to j}$ equals the number of triangles containing edge $(i,j)$.

\begin{theorem}\label{T2}
The set of non-zero eigenvalues of matrix $M$ defined in Eq.~(\ref{D1}) consists of
	all non-zero eigenvalues of the 2nd-order non-backtracking matrix ${B}^{(2)}$ 
	and possibly eigenvalues $-1$, $(1+\sqrt{3}\mathrm{i})/2$ and $(1-\sqrt{3}\mathrm{i})/2$. 
\end{theorem}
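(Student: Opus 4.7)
The strategy is to reduce both eigenvalue problems to a common polynomial matrix equation on a vector of dimension $2E$. Because $M$ is a block companion matrix, any eigenpair $Mv = \lambda v$ with $\lambda \neq 0$ and $v = (x_1, x_2, x_3, x_4)^\top$ forces $x_2 = x_1/\lambda$, $x_3 = x_1/\lambda^2$, $x_4 = x_1/\lambda^3$ from the last three block-rows, and the first block-row then collapses to
\begin{equation*}
P(\lambda)\,x_1 := \bigl[\lambda^4 I - \lambda^3 B^{(1)} + \lambda^2 \Delta B_2 - \lambda(D_\Delta - I) - (B^{(1)} - \Delta B_1)\bigr] x_1 = 0.
\end{equation*}
So $\lambda \neq 0$ is an eigenvalue of $M$ iff $P(\lambda)$ is singular, and the task becomes identifying those singular values as exactly the non-zero eigenvalues of $B^{(2)}$ together with, possibly, the three roots of $\lambda^3 + 1 = 0$.

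For the forward direction, I would project an eigenvector $y$ of $B^{(2)}$ to the edge-indexed vector $x_1(j \to k) = \sum_{\ell \in \mathcal{N}_k \setminus \{j\}} y(\pathfour{j}{k}{\ell}{})$. Written path-wise, $B^{(2)} y = \lambda y$ becomes $\lambda y(\pathfour{i}{j}{k}{}) = x_1(j \to k) - a_{ki}\,y(\pathfour{j}{k}{i}{})$; iterating this recursion three more times around each triangle and identifying the resulting triangle-sums with $\Delta B_1$, $\Delta B_2$, $D_\Delta$ yields a four-equation system
\begin{equation*}
\lambda x_1 + S_1 = B^{(1)} x_1, \quad \lambda S_1 + S_2 = \Delta B_2 x_1, \quad \lambda S_2 + S_3 = D_\Delta x_1, \quad \lambda S_3 + S_1 = \Delta B_1 x_1.
\end{equation*}
Taking the alternating combination $\lambda^3 \cdot (\text{1st}) - \lambda^2 \cdot (\text{2nd}) + \lambda \cdot (\text{3rd}) - (\text{4th})$ cancels $S_2$ and $S_3$ entirely, and substituting $S_1 = B^{(1)} x_1 - \lambda x_1$ from the first equation delivers $P(\lambda) x_1 = 0$ as a formal polynomial identity.

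For the reverse direction, I would fix $\lambda$ with $\lambda^3 + 1 \neq 0$ and reconstruct $y$ from any $x_1 \neq 0$ annihilating $P(\lambda)$ by setting $y(\pathfour{i}{j}{k}{}) = x_1(j \to k)/\lambda$ when $a_{ki} = 0$, and $y(\pathfour{i}{j}{k}{}) = [\lambda^2 x_1(j \to k) - \lambda x_1(k \to i) + x_1(i \to j)]/(\lambda^3 + 1)$ when $a_{ki} = 1$. The denominator $\lambda^3 + 1$ is exactly the determinant of the cyclic $3 \times 3$ matrix governing the three rotated triangle equations. A case analysis then verifies $B^{(2)} y = \lambda y$: on triangle paths the identity holds algebraically in $x_1$, and on non-triangle paths it reduces, via the explicit formula for $y$, back to $P(\lambda) x_1 = 0$. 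The roots $-1$ and $(1 \pm \sqrt{3}\mathrm{i})/2$ of $\lambda^3 + 1$ are precisely where this triangle system degenerates and the reconstruction fails, so $M$-eigenvectors at these special values need not descend to $B^{(2)}$-eigenvectors---this is the source of the ``possibly'' clause.

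The main obstacle I anticipate is the combinatorial bookkeeping in the forward direction: the three sums $S_1, S_2, S_3$ differ only in which directed edge of a common triangle is being aggregated, so matching each of them against the correct one of $\Delta B_1$, $\Delta B_2$, $D_\Delta$ requires careful re-indexing of triangle-vertex roles. A secondary technical point is to verify that the projection $y \mapsto x_1$ has trivial kernel on the $\lambda$-eigenspace whenever $\lambda^3 + 1 \neq 0$, so that non-zero eigenvectors of $B^{(2)}$ are genuinely lifted to non-zero eigenvectors of $M$.
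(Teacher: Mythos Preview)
Your proposal is correct and follows essentially the same route as the paper: both project a $B^{(2)}$-eigenvector $\psi$ to the $2E$-vector $x_{i\to j}=\sum_k \psi_{i\to j\to k}$, iterate the recursion $\lambda\psi_{i\to j\to k}=x_{j\to k}-a_{ki}\psi_{j\to k\to i}$ around triangles, and reconstruct $\psi$ from $x$ via the formula you wrote (the paper's Eq.~(\ref{F31})), with $\lambda^3+1=0$ as the obstruction. Your organization is slightly cleaner---you make the block-companion structure of $M$ and the matrix polynomial $P(\lambda)$ explicit, and you track three cyclic triangle-sums $S_1,S_2,S_3$ where the paper introduces only one auxiliary vector $y$ (your $S_3$) and unrolls the other two inline in its Eqs.~(\ref{F4})--(\ref{F6})---but the computations match term for term. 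Your flagged ``secondary technical point'' about injectivity of $\psi\mapsto x_1$ is indeed needed and is not spelled out in the paper either; it follows because $x_1=0$ forces $(\lambda^3+1)\psi_{i\to j\to k}=0$ on triangle paths and $\psi_{i\to j\to k}=0$ on non-triangle paths.
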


\begin{proof}
The proof will include two parts. First, we prove that every non-zero eigenvalue of $B^{(2)}$ corresponds to an eigenvalue of $M$. Second, we elaborate that every non-zero eigenvalue of $M$ equals to one of eigenvalues of $B^{(2)}$, except possible eigenvalues $-1$, $(1+\sqrt{3}\mathrm{i})/2$ and $(1-\sqrt{3}\mathrm{i})/2$.

(i) Let $\lambda$ be an arbitrary non-zero eigenvalue of ${B}^{(2)}$ and $\psi$ be its corresponding eigenvector. According to the definition of the 2nd-order non-backtracking matrix, we have
\begin{eqnarray}\label{F3}
\lambda \psi_{i\to j\to k}=\sum_{\ell\in\mathcal{N}_k\setminus\{i, j\}}B^{(1)}_{j\to k, k\to l}\psi_{j\to k\to \ell}.
\end{eqnarray}

In order to show each non-zero eigenvalue of $B^{(2)}$ is also an eigenvalue of $M$, we first introduce two related quantities:
\begin{equation}\label{F2}
x_{i\to j}=\sum_{k\in\mathcal{N}_j\setminus\{i\}}B^{(1)}_{i\to j, j\to k}\psi_{i\to j\to k}
\end{equation}
and
\begin{equation}\label{F20}
y_{i\to j}=\sum_{k\in\mathcal{N}_j\setminus\{i\}}B^{(1)}_{i\to j, j\to k}B^{(1)}_{j\to k, k\to i}\psi_{i\to j\to k}.
\end{equation}
Let $x$ and $y$ be vectors with elements $x_{i\to j}$ and $y_{i\to j}$, respectively. We claim that $x$ and $y$ satisfies the following two relations:
\begin{equation}\label{F8}
x=\frac{1}{\lambda}{B}^{(1)}x-\frac{1}{\lambda^2}\Delta{B}_2x+\frac{1}{\lambda^3}{D}_{\Delta}x-\frac{1}{\lambda^3}y
\end{equation}
and
\begin{eqnarray}\label{F13}
y=\frac{1}{\lambda}\Delta{B}_1 x - \frac{1}{\lambda ^2}\Delta{B}_2 x + \frac{1}{\lambda^3}{D}_\Delta x-\frac{1}{\lambda^3}y,
\end{eqnarray}
which will be proved latter. 

Combining Eq.~(\ref{F8}) and Eq.~(\ref{F13}), we can obtain
\begin{eqnarray*}\label{F14}
x-y=\frac{1}{\lambda}({B}^{(1)}-\Delta{B}_1)x,
\end{eqnarray*}
which indicates
\begin{eqnarray}\label{F15}
y=\left({I}-\frac{1}{\lambda}{B}^{(1)}+\frac{1}{\lambda}\Delta{B}_1\right)x.
\end{eqnarray}
Plugging Eq.~(\ref{F15}) into Eq.~(\ref{F8}) yields
\begin{eqnarray}\label{F16}
x&=&\frac{1}{\lambda}{B}^{(1)}x-\frac{1}{\lambda^2}\Delta{B}_2 x + \frac{1}{\lambda^3}\left({D}_{\Delta}-{I}\right)x\nonumber\\
&\quad&+\frac{1}{\lambda^4}({B}^{(1)}-\Delta{B}_1)x.
\end{eqnarray}
Based on Eq.~(\ref{F16}), we establish
\begin{eqnarray*}\label{F17}
{M}z=\lambda z,
\end{eqnarray*}
where $z=(x^\top,\frac{1}{\lambda}x^\top,\frac{1}{\lambda^2}x^\top,\frac{1}{\lambda^3}x^\top)^\top$. Thus, every non-zero eigenvalue of $B^{(2)}$ corresponds to an eigenvalue of $M$.

\emph{Proof of} Eq.~(\ref{F8}). Plugging Eq.~(\ref{F3}) into Eq.~(\ref{F2}), we have
\begin{eqnarray}\label{F4}
x_{i\to j}&=&\frac{1}{\lambda}\sum_{k\in\mathcal{N}_j\setminus\{i\}}B^{(1)}_{i\to j,j\to k}\sum_{\ell\in\mathcal{N}_k\setminus\{j\}} B^{(1)}_{j\to k, k\to \ell}\psi_{j\to k\to \ell}\nonumber\\
&\quad&-\frac{1}{\lambda}\sum_{k\in\mathcal{N}_j \cap \mathcal{N}_i}B^{(1)}_{i\to j, j\to k}B^{(1)}_{j\to k, k\to i}\psi_{j\to k\to i}.
\end{eqnarray}
In the r.h.s of Eq.~(\ref{F4}), the first term can be rewritten as
\begin{eqnarray}\label{F5}
&\quad&\frac{1}{\lambda}\sum_{k\in\mathcal{N}_j\setminus\{i\}}B^{(1)}_{i\to j,j\to k}\sum_{\ell\in\mathcal{N}_k\setminus\{j\}} B^{(1)}_{j\to k, k\to \ell}\psi_{j\to k\to \ell}\nonumber\\
&=&\frac{1}{\lambda}\sum_{k\in\mathcal{N}_j\setminus\{i\}}B^{(1)}_{i\to j, j\to k}x_{j\to k}.
\end{eqnarray}
Here, Eq.~(\ref{F2}) is used. Utilizing Eq.~(\ref{F3}) again, the second term of r.h.s of Eq.~(\ref{F4}) can be represented as
\begin{eqnarray}\label{F6}
&\quad&\frac{1}{\lambda}\sum_{k\in\mathcal{N}_j\cap\mathcal{N}_i}B^{(1)}_{i\to j, j\to k}B^{(1)}_{j\to k, k\to i}\psi_{j\to k\to i}\nonumber\\
&=&\frac{1}{\lambda^2}\sum_{k\in\mathcal{N}_j\cap\mathcal{N}_i}B^{(1)}_{i\to j, j\to k}B^{(1)}_{j\to k, k\to i}\sum_{\ell\in\mathcal{N}_i\setminus\{k\}}B^{(1)}_{k\to i, i\to \ell}\psi_{k\to i\to \ell}\nonumber\\
&\quad&-\frac{1}{\lambda^2}\sum_{k\in\mathcal{N}_j\cap\mathcal{N}_i}B^{(1)}_{i\to j, j\to k}B^{(1)}_{j\to k, k\to i}B^{(1)}_{k\to i, i\to j}\psi_{k\to i\to j}\nonumber\\
&=&\frac{1}{\lambda^2}\sum_{k\in\mathcal{N}_j\cap\mathcal{N}_i}B^{(1)}_{i\to j, j\to k}B^{(1)}_{j\to k, k\to i}x_{k\to i}\nonumber\\
&\quad&-\frac{1}{\lambda^3}\sum_{k\in\mathcal{N}_j\cap\mathcal{N}_i}B^{(1)}_{i\to j, j\to k}B^{(1)}_{j\to k, k\to i}\sum_{\ell\in\mathcal{N}_j\setminus\{i\}}B^{(1)}_{i\to j, j\to \ell}\psi_{i\to j\to \ell}\nonumber\\
&\quad&+\frac{1}{\lambda^3}\sum_{k\in\mathcal{N}_j\cap\mathcal{N}_i}B^{(1)}_{i\to j, j\to k}B^{(1)}_{j\to k, k\to i}\psi_{i\to j\to k}\nonumber\\
&=&\frac{1}{\lambda^2}\sum_{k\in\mathcal{N}_j\cap\mathcal{N}_i}B^{(1)}_{i\to j, j\to k}B^{(1)}_{j\to k, k\to i}x_{k\to i}\nonumber\\
&\quad&-\frac{\Delta_{i\to j}}{\lambda^3}x_{i\to j}+\frac{1}{\lambda^3}y_{i\to j},
\end{eqnarray}
where $\Delta_{i\to j}$ is the number of length-3 loops starting from edge $i\to j$.

Inserting Eq.~(\ref{F5}) and Eq.~(\ref{F6}) into Eq.~(\ref{F4}), we have
\begin{eqnarray*}\label{F7}
x_{i\to j}&=&\frac{1}{\lambda}\sum_{k\in\mathcal{N}_j\setminus\{i\}}B^{(1)}_{i\to j, j\to k}x_{j\to k}\nonumber\\
&\quad&-\frac{1}{\lambda^2}\sum_{k\in\mathcal{N}_j\cap\mathcal{N}_i}B^{(1)}_{i\to j, j\to k}B^{(1)}_{j\to k, k\to i}x_{k\to i}\nonumber\\
&\quad&+\frac{\Delta_{i\to j}}{\lambda^3}x_{i\to j}-\frac{1}{\lambda^3}y_{i\to j}.
\end{eqnarray*}
Recasting Eq.~(\ref{F7}) in matrix notation, we can obtain Eq.~(\ref{F8}).


\emph{Proof of} Eq.~(\ref{F13}). Analogously, substituting Eq.~(\ref{F3}) into Eq.~(\ref{F20}), $y_{i\to j}$ can be represented as
\begin{eqnarray}\label{F9}
y_{i\to j}&=&\frac{1}{\lambda}\sum_{k\in\mathcal{N}_j\setminus\{i\}}B^{(1)}_{i\to j, j\to k}B^{(1)}_{j\to k, k\to i}\nonumber\\
&\quad&\sum_{\ell\in\mathcal{N}_k\setminus\{j\}}B^{(1)}_{j\to k, k\to \ell}\psi_{j\to k\to \ell}\nonumber\\
&\quad&-\frac{1}{\lambda}\sum_{k\in\mathcal{N}_j\setminus\{i\}}B^{(1)}_{i\to j, j\to k}B^{(1)}_{j\to k, k\to i}\psi_{j\to k\to i}.
\end{eqnarray}
The first term in r.h.s of Eq.~(\ref{F9}) can be further expressed as
\begin{eqnarray}\label{F11}
&\quad&\frac{1}{\lambda}\sum_{k\in\mathcal{N}_j\setminus\{i\}}B^{(1)}_{i\to j, j\to k}B^{(1)}_{j\to k, k\to i}\sum_{\ell\in\mathcal{N}_k\setminus\{j\}}B^{(1)}_{j\to k, k\to \ell}\psi_{j\to k\to \ell}\nonumber\\
&=&\frac{1}{\lambda}\sum_{k\in\mathcal{N}_j\setminus\{i\}}B^{(1)}_{i\to j, j\to k}B^{(1)}_{j\to k, k\to i}x_{j\to k}.
\end{eqnarray}
Instituting Eq.~(\ref{F6}) and Eq.~(\ref{F11}) into Eq.~(\ref{F9}), one has
\begin{eqnarray*}\label{F12}
y_{i\to j}&=&\frac{1}{\lambda}\sum_{k\in\mathcal{N}_j\setminus\{i\}}B^{(1)}_{i\to j, j\to k}B^{(1)}_{j\to k, k\to i}x_{j\to k}\nonumber\\
&\quad&-\frac{1}{\lambda^2}\sum_{k\in\mathcal{N}_j\setminus\{i\}}B^{(1)}_{i\to j,j\to k}B^{(1)}_{j\to k, k\to i}x_{k\to i}\nonumber\\
&\quad&+\frac{\Delta_{i\to j}}{\lambda^3}x_{i\to j}-\frac{1}{\lambda^3}y_{i\to j},
\end{eqnarray*}
which can be recast in matrix form to obtain Eq.~(\ref{F13}).


(ii) Now, we prove that every non-zero eigenvalue of $M$ equals to one of eigenvalues of $B^{(2)}$, 
	except possible eigenvalues $-1$, $(1+\sqrt{3}\mathrm{i})/2$ and $(1-\sqrt{3}\mathrm{i})/2$. Let $\lambda$ be an arbitrary non-zero eigenvalue of $M$ and $z=(x^\top,\frac{1}{\lambda}x^\top,\frac{1}{\lambda^2}x^\top,\frac{1}{\lambda^3}x^\top)^\top$ be its corresponding eigenvector. According to Eq.~(\ref{F3}) and Eq.~(\ref{F2}), we have
\begin{eqnarray}\label{F10}
\lambda\psi_{i\to j\to k}=\left\{\begin{aligned}
&x_{j\to k}-\psi_{j\to k\to i}, i\in\mathcal{N}_k \\
&x_{j\to k}, i\notin\mathcal{N}_k.
\end{aligned}
\right.
\end{eqnarray}
Thus, for the case of nodes $i$, $j$ and $k$ constituting a triangle, we can establish the following equations:
\begin{eqnarray*}
\left\{\begin{aligned}
&\lambda\psi_{i\to j\to k}+\psi_{j\to k\to i}=x_{j\to k}\\
&\lambda\psi_{j\to k\to i}+\psi_{k\to i\to j}=x_{k\to i}\\
&\lambda\psi_{k\to i\to j}+\psi_{i\to j\to k}=x_{i\to j}
\end{aligned}\right..
\end{eqnarray*}
When $\lambda^3+1\neq 0$, i.e. $\lambda\neq -1, (1+\sqrt{3}\mathrm{i})/2$ or $(1-\sqrt{3}\mathrm{i})/2$,  we have solutions as
\begin{eqnarray}\label{F30}
\left\{\begin{aligned}
&\psi_{i\to j\to k}=\frac{1}{\lambda^3+1}(\lambda^2 x_{j\to k}-\lambda x_{k\to i}+x_{i\to j})\\
&\psi_{j\to k\to i}=\frac{1}{\lambda^3+1}(\lambda^2 x_{k\to i}-\lambda x_{i\to j}+x_{j\to k})\\
&\psi_{k\to i\to j}=\frac{1}{\lambda^3+1}(\lambda^2 x_{i\to j}-\lambda x_{j\to k}+x_{k\to i})
\end{aligned}
\right..
\end{eqnarray}
Combining Eq.~\eqref{F10} and Eq.~\eqref{F30}, we have
\begin{eqnarray}\label{F31}
\psi_{i\to j\to k}=\left\{\begin{aligned}
&\frac{1}{\lambda^3+1}(\lambda^2 x_{j\to k}-\lambda x_{k\to i}+x_{i\to j}), i\in\mathcal{N}_k\\
&\frac{1}{\lambda}x_{j\to k}, i\notin\mathcal{N}_k
\end{aligned}
\right..
\end{eqnarray}
Thus, for any non-zero eigenvalue $\lambda$ of $M$ except $-1$, $(1+\sqrt{3}\mathrm{i})/2$ and $(1-\sqrt{3}\mathrm{i})/2$, it satisfies $B^{(2)}\psi=\lambda\psi$, and the values of elements in $\psi$ can be determined by Eq.~\eqref{F31}. 
\end{proof}

Exploiting Theorem~\ref{T2}, we give an approach to save time and space cost for computing $p_c^{(2)}$. 
In particular, the numbers of edges and length-2 directed paths in a network $\mathcal{G}$ are $E=\frac{1}{2}\sum_{i=1}^{N}d_i$ and $P_2=\sum_{i=1}^{N}d_i(d_i-1)$, respectively, where $d_i$ is the degree of node $i$. Thus, we reduce the size of the matrix to be computed by a factor of $(\sum_{i=1}^{N}d_i^2-\sum_{i=1}^{N}d_i)/(4\sum_{i=1}^{N}d_i)\approx \langle d^2\rangle/(4\langle d\rangle)$. 

\section{Experiments} \label{sec:exp}

In this section, we empirically investigate the validity of our theoretical estimate for bond percolation threshold.
First, we compare $p_c^{(2)}$ with other theoretical indicators $p_c^{(1)}$ and $p_c^{(0)}$ on a class of synthetic networks
	generated by the forest fire model \cite{LeKlFa07}. 
Next, we perform extensive experiments on 42 real networks to further explore the performance of these estimations.


We use the peak value of the second largest component (Eq.~\eqref{A0}) as the ground truth of the bond percolation
	threshold $p_c$.
For each network, the value of the empirical ground truth $p_c$
	 is computed by 1000 independent Monte Carlo simulations of the percolation process as proposed by Newman and Ziff~\cite{NeZi00}.

\begin{figure}[t]
	\centering
	\captionsetup[subfigure]{font=scriptsize,oneside,margin={0.6cm,0.0cm}}
	\setcounter{subfigure}{0}%
	\subfloat[\textrm{burning probability $0.01$}]
	{\includegraphics[width=0.5\linewidth,trim= 50 0 50 100]{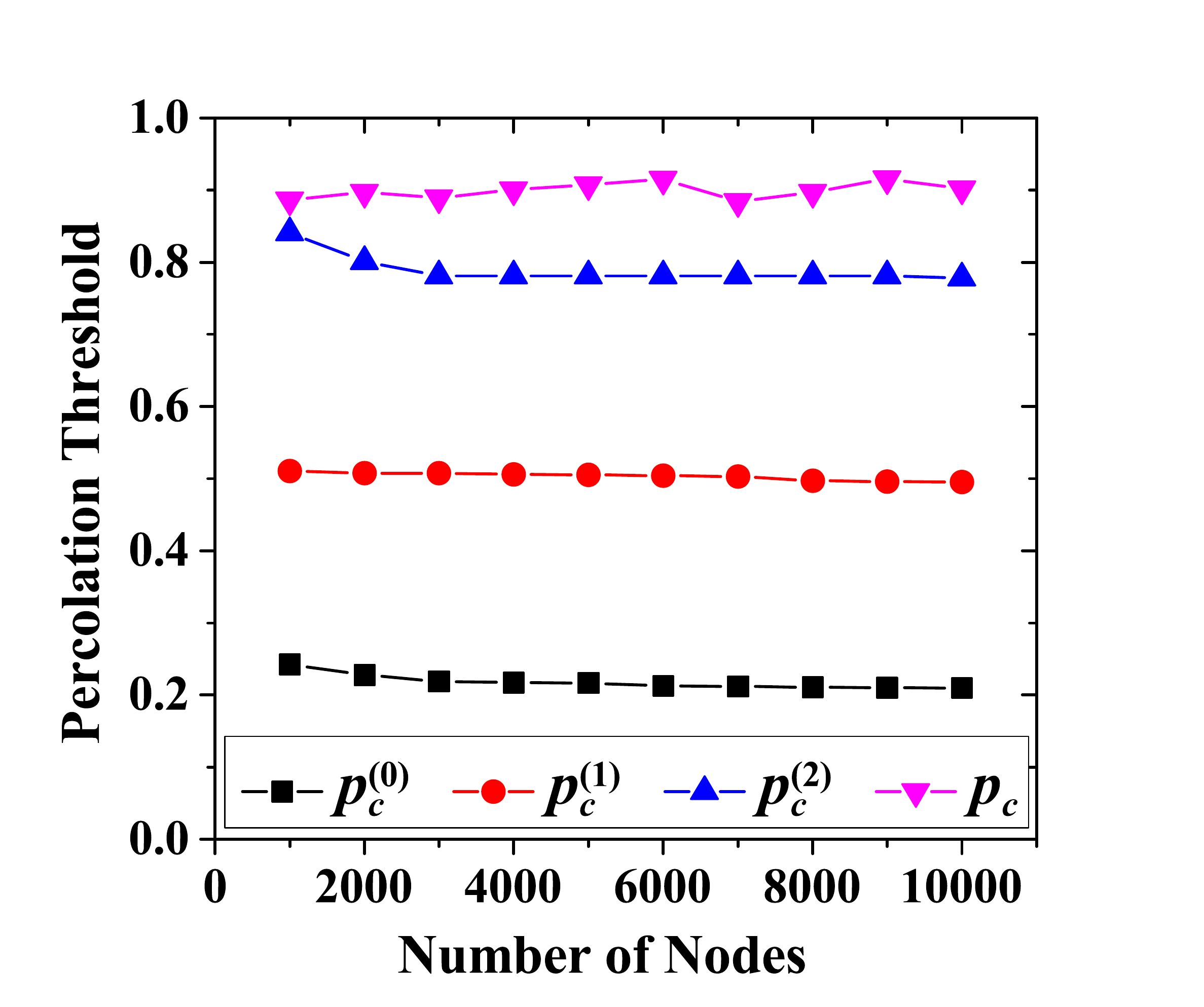}\label{fig:FFa}}
	\subfloat[\textrm{5000 nodes}]
	{\includegraphics[width=0.5\linewidth,trim= 50 0 50 100]{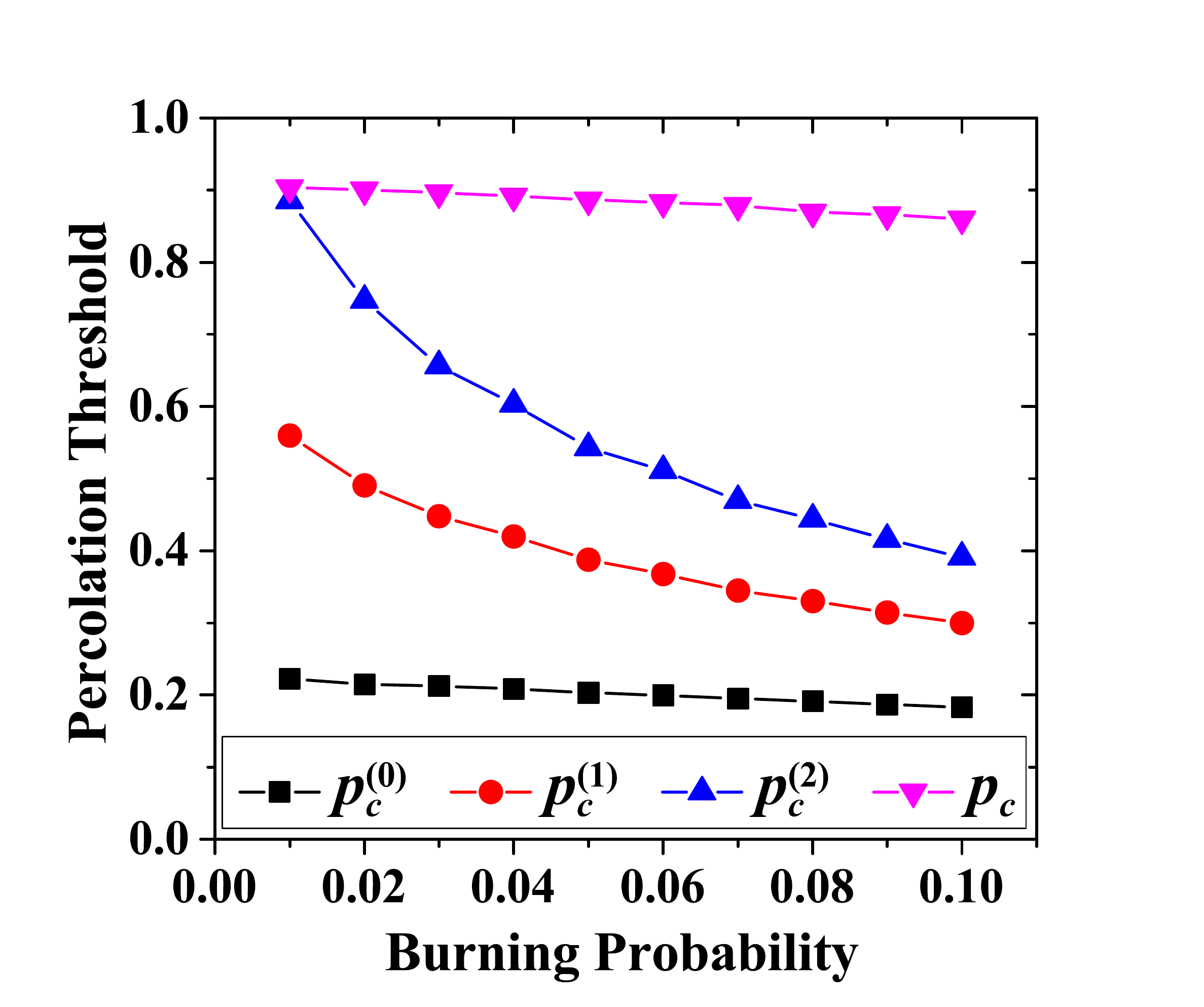}\label{fig:FFb}}
	\vspace{-0.2cm}
	\caption{Empirical and theoretical estimates of the percolation threshold on the forest fire model.}
	\label{fig:forestfire}
\end{figure}

\begin{figure}[t]
	\centering
	\captionsetup[subfigure]{font=scriptsize,oneside,margin={0.0cm,0.0cm}}
	\setcounter{subfigure}{0}%
	\subfloat[\textrm{scatter plot}]
	{\includegraphics[width=0.5\linewidth,trim= 50 0 50 100]{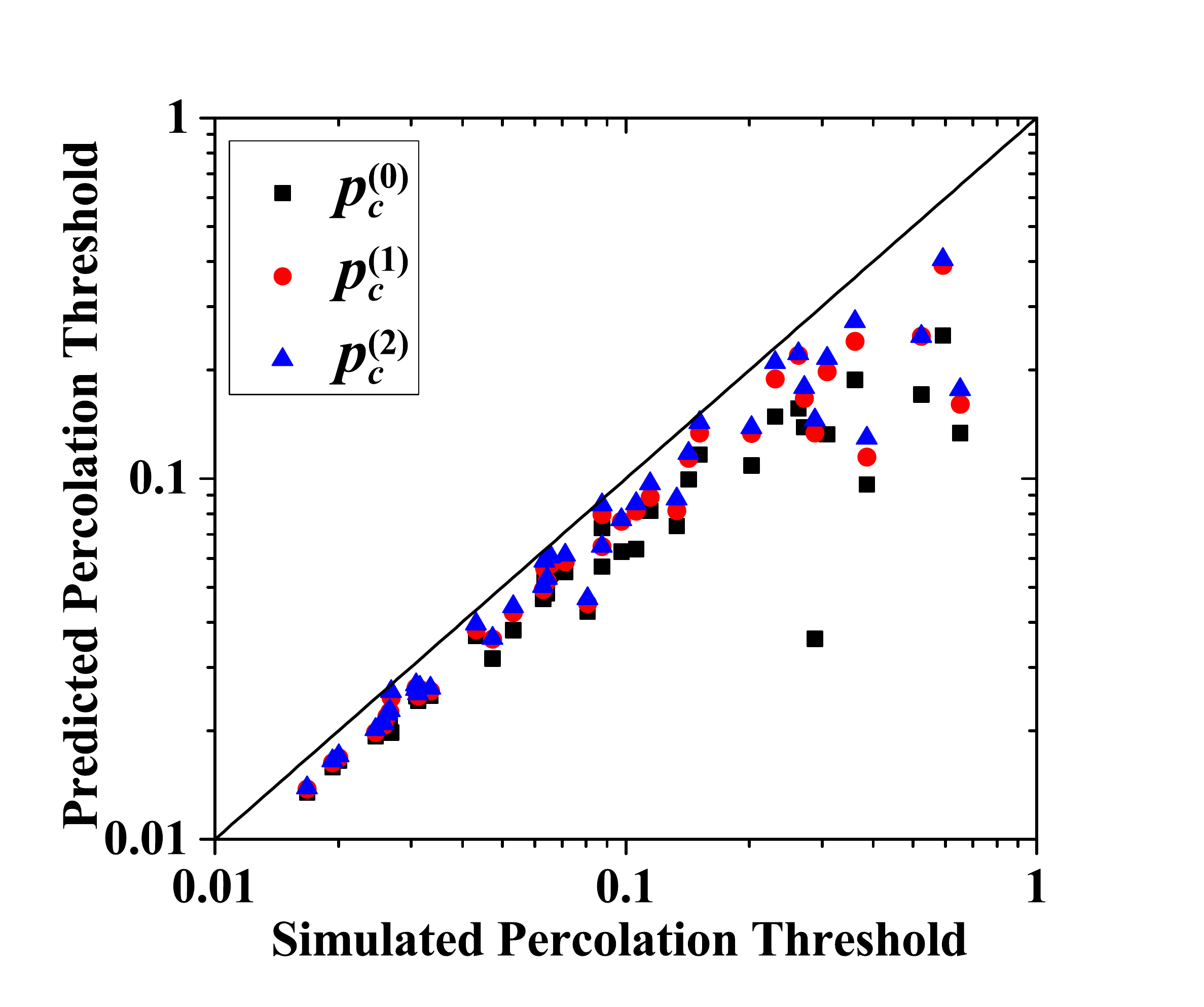}\label{fig:reala}}
	\subfloat[\textrm{cumulative distribution of relative errors}]
	{\includegraphics[width=0.5\linewidth,trim= 50 0 50 100]{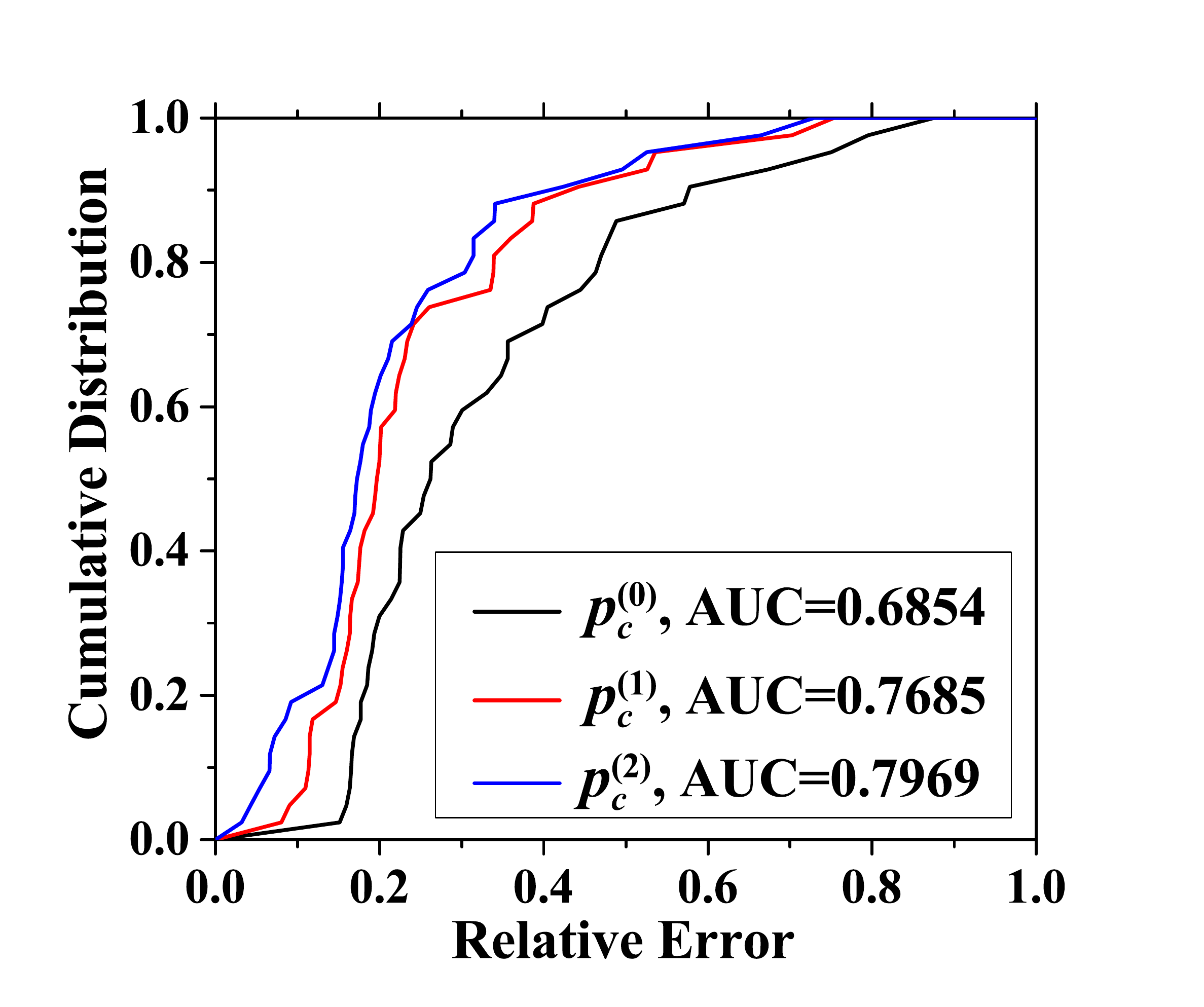}\label{fig:realb}}
	\vspace{-0.2cm}
	\caption{Test results of empirical and theoretical percolation thresholds on 42 real-world networks.}
	\label{fig:real}
\end{figure}

\begin{figure*}[t]
	\centering
	\captionsetup[subfigure]{font=scriptsize,oneside,margin={-0.5cm,-0.5cm}}
	\setcounter{subfigure}{0}%
	\subfloat[\textrm{relative errors vs. empirical estimates}]
	{\includegraphics[width=0.25\linewidth,trim= 50 0 50 100]{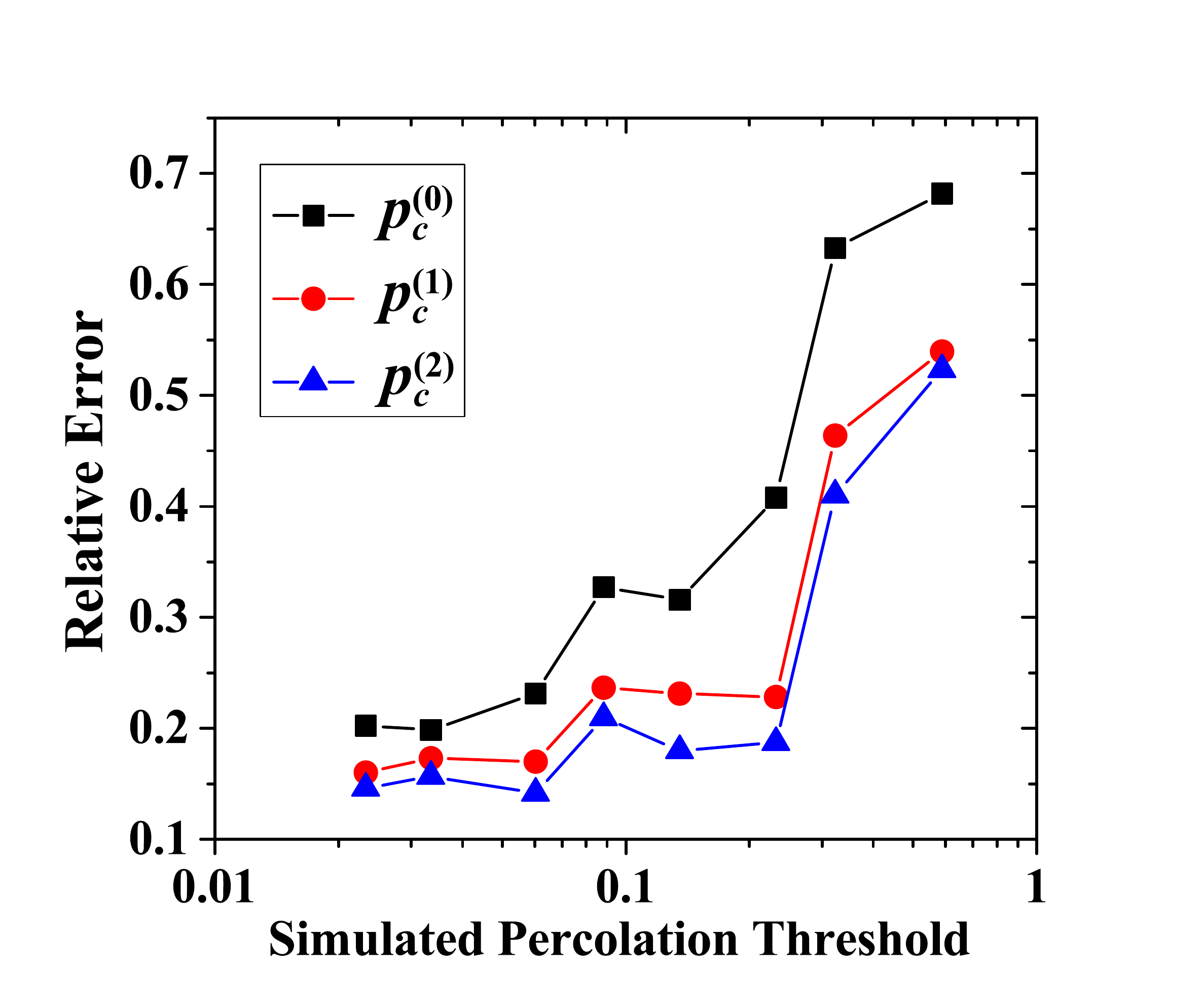}\label{fig:realc}}
	\hspace{1cm}
	\subfloat[\textrm{relative errors vs. average degrees}]
	{\includegraphics[width=0.25\linewidth,trim= 50 0 50 100]{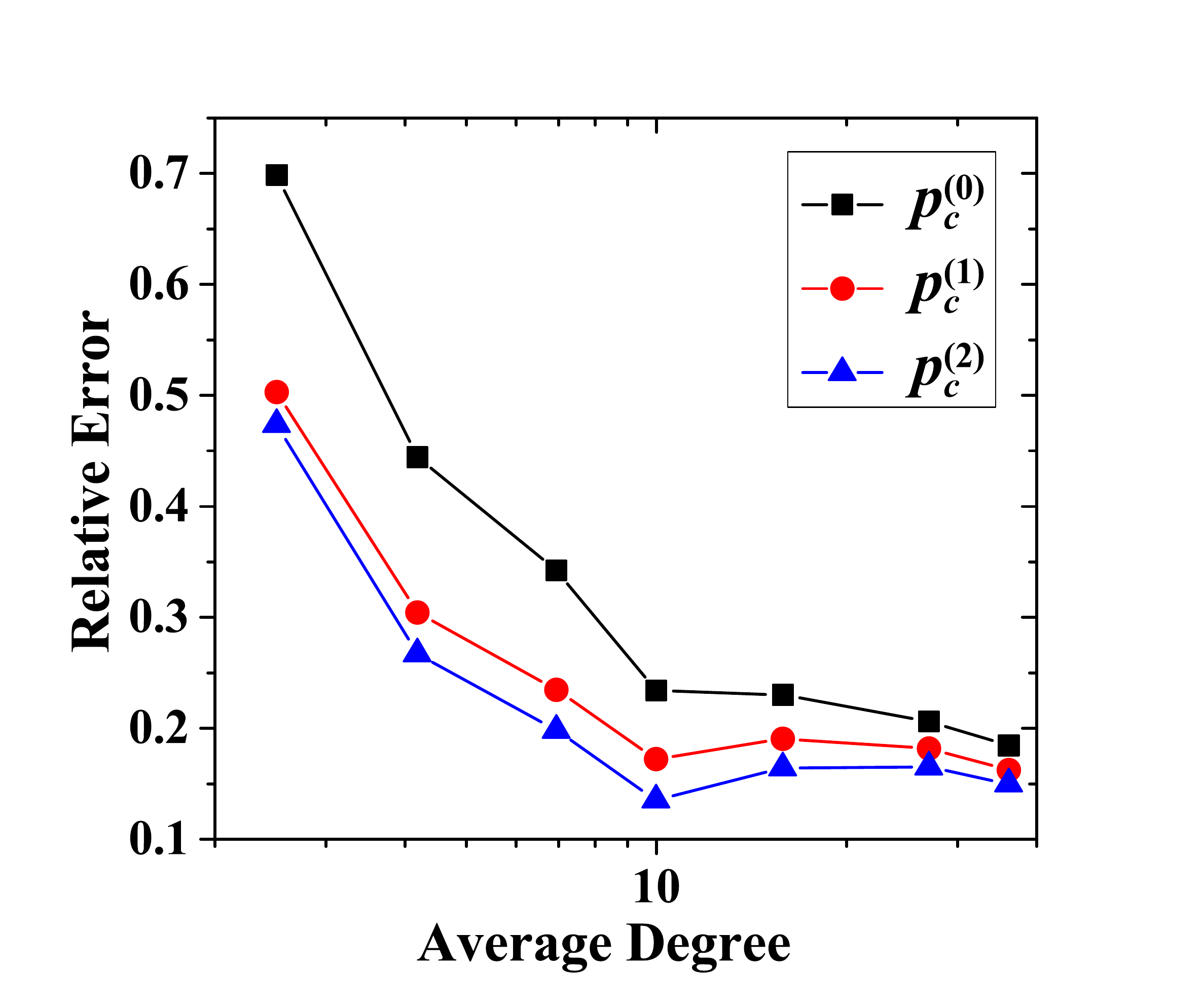}\label{fig:reald}}
	\hspace{1cm}
	\subfloat[\textrm{relative errors vs. categories}]
	{\includegraphics[width=0.25\linewidth,trim= 50 0 50 0]{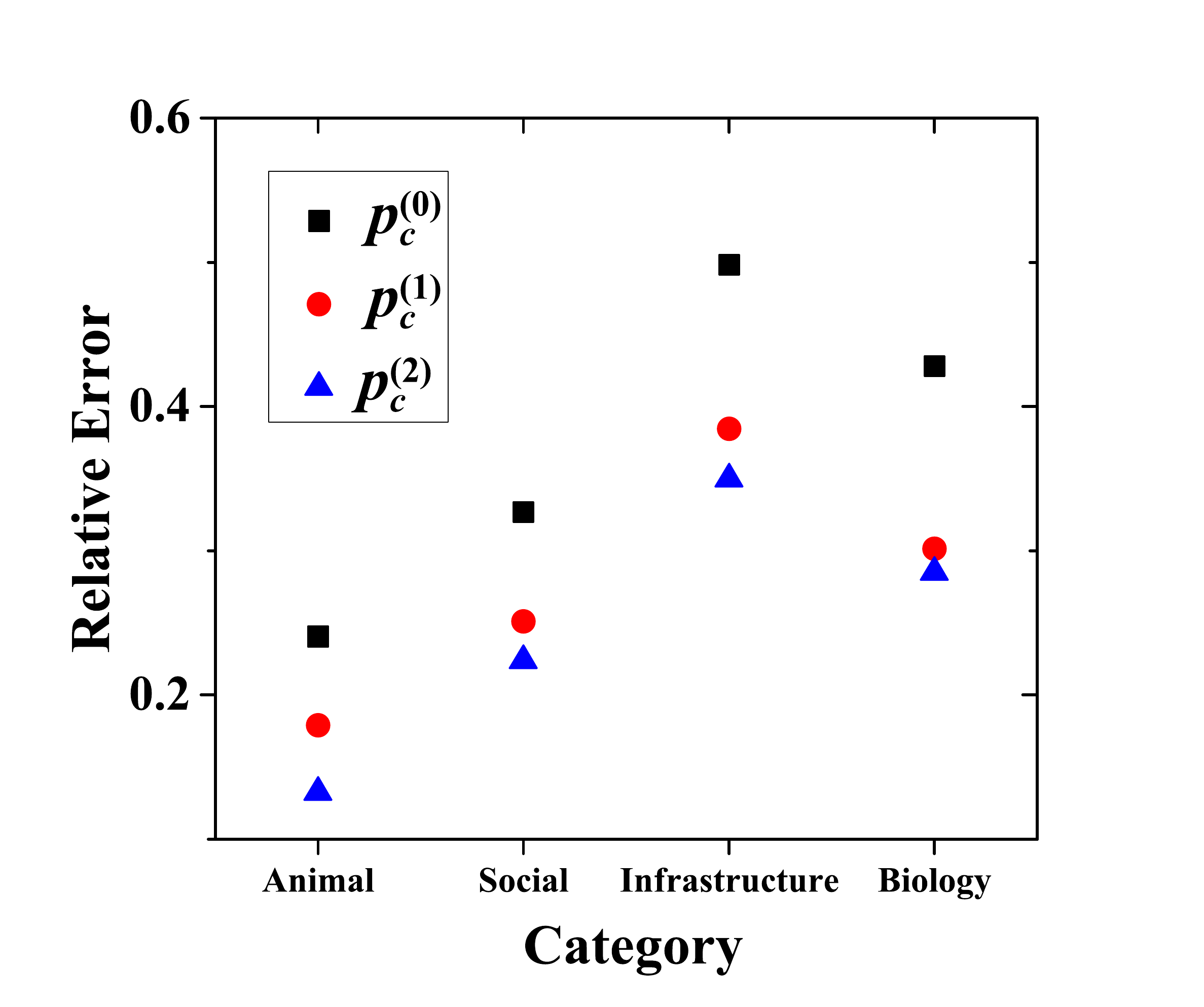}\label{fig:cate}}
	\vspace{-0.2cm}
	\caption{Relative errors of theoretical percolation thresholds on 42 real-world networks.}
	\label{fig:real2}
\end{figure*}

\subsection{Forest Fire Model}

The forest fire model \cite{LeKlFa07} is a family of evolutionary networks, controlled by burning probability $q$. 

We denote the network at time $t$ as $G_t$. Initially ($t=1$), there is only one node in $G_1$. At time $t>1$, there is a new node $u$ joining the network $G_{t-1}$ and generating $G_t$. 
The node $u$ establishes connections with other existing nodes through the following process: 
	(1) Node $u$ first selects an ambassador node $v$ in $G_{t-1}$ uniformly at random and connects to it. 
	(2) We sample a random number $a\in(0,1]$. If $a \le q$, node $u$ randomly chooses a neighbor of $v$, which is not connected to $u$ yet, and forms a link to it. Then repeat this step. If $a>q$, this step ends and we label all nodes linking to $u$ at this step as $w_1, w_2, \cdots, w_k$; 
	(3) Let $w_1, w_2, \cdots, w_k$ sequentially be the ambassador of node $u$ and apply step (2) for each of them recursively. A node in the process should not be visited a second time.

The generation process of the forest fire model describes new nodes joining social networks with an epidemic fashion. 
In addition, forest fire model shares a number of remarkable structural features with real networks, such as densification and shrinking diameters~\cite{LeKlFa07}. 
Thus, studying percolation on the forest fire model can enhance our understanding of spreading in realistic systems.

We first consider a case that the burning probability is very small, so that the network $G_t$ is sparse. 
In Figure~\ref{fig:FFa}, we plot bond percolation threshold estimated by different approaches during the evolution of a forest fire model with $q=0.01$. 
It shows that along with the growth of the network, the empirical estimate tends to $0.901$. 
The predicted percolation thresholds based on the adjacency and non-backtracking matrices 
	tend to $0.210$ and $0.495$, respectively, which are far from the empirical estimate. 
Although $p_c^{(1)}$ performs better than $p_c^{(0)}$ in sparse networks, the appearance of triangle structures largely weakens its effectiveness. 
The estimation based on the 2nd-order non-backtracking matrix shows remarkable improvement. 
At the end of the evolution, $p_c^{(2)}$ tends to $0.778$, which corresponds to an improvement of roughly $30\%$ from $p_c^{(1)}$. It indicates $p_c^{(2)}$ is evidently more precise than $p_c^{(1)}$ for sparse networks with triangles.

Next we test the impact of burning probability on our indicator.
In Figure~\ref{fig:FFb}, we show the percolation threshold given by the theoretical and simulated estimations versus burning probability on forest fire model with 5000 nodes. 
Each value is the average of 100 networks.
We observe that, along with the growth of the burning probability, the empirical percolation threshold decreases, because of the increase of the number of edges.
For the theoretical indicators, it is not surprising that $p_c^{(2)}$ always performs better than $p_c^{(1)}$ and $p_c^{(0)}$, especially in the regime of small burning probability. However, as burning probability becomes large, the precision of $p_c^{(2)}$ decreases. 
In addition, the values of $p_c^{(0)}$,$p_c^{(1)}$ and $p_c^{(2)}$ are getting close.
This is because the growth of the burning probability introduces more and more simple cycles with length larger than 3, which would lead to overestimating probability $\theta_{i\to j\to k}$ for $i\to j\to k$ in such a cycle. Thus, $p_c^{(2)}$ becomes less effective in predicting the percolation threshold.

\subsection{Real Networks}

For the purpose of better understanding the predictive power of different theoretical estimations in real networks, we evaluate bond percolation threshold predicted by $p_c^{(0)}$, $p_c^{(1)}$, $p_c^{(2)}$ and simulations on 42 real networks. The dataset includes social, infrastructural, animal and biological networks, which exhibit various topological properties, with size from 23 to 4941. In Table~\ref{Stas}, we report the theoretical and empirical estimations for bond percolation threshold on the 42 real networks. All networks are treated as undirected unweighted networks. We only consider the largest connected component in each network. All datasets are from the Koblenz Network Collection (http://konect.uni-koblenz.de/).

\OnlyInFull{In Appendix~\ref{App2}, we display the structural information and detailed results for each network.}



Figure~\ref{fig:reala} reports values of $p_c^{(0)}$, $p_c^{(1)}$ and $p_c^{(2)}$ as functions of the empirical estimate for all 42 networks. 
It shows that $p_c^{(2)}$ is always closer to the empirical estimate than $p_c^{(1)}$ and $p_c^{(0)}$, which is in agreement with our theoretical analysis. In addition, we observe that there are some networks on which $p_c^{(2)}$ can attain the empirical estimate. However, $p_c^{(2)}$ is not always close enough to the empirical estimate, especially on networks in the regime of large percolation threshold.


Next, we continue to quantitatively analyze how much improvement our proposed method attains. 
We define relative error of the theoretical estimate as $(p_c-p_c^{(\alpha)})/p_c$ ($\alpha=0, 1$ or $2$). 
Then, we plot cumulative distribution of relative errors of $p_c^{(2)}$, $p_c^{(1)}$ and $p_c^{(0)}$ in the collection of 42 networks, as shown in Figure~\ref{fig:realb}. We adopt the area under the curve (AUC) as a measurement for global performance. The results is that there is an improvement of 8\% from $p_c^{(0)}$ to $p_c^{(1)}$, and 3\% from $p_c^{(1)}$ to $p_c^{(2)}$.




In order to further understand the influence of the true percolation threshold on our theoretical prediction, we first divide the range of possible values of the empirical estimate in eight bins. Then, we calculate the average of the empirical estimate and relative errors, respectively, in each bin. Figure~\ref{fig:realc} reports the average of relative error with respect to the average of the empirical estimate. It implies that $p_c^{(2)}$ works well on the networks in the range of small true percolation threshold; while it become less predictive on networks with large true percolation threshold.



We also analyze the impact of density of networks on our theoretical prediction. Analogously, we divide the range of possible values of average degree in seven bins and separately calculate the means of average degree and relative errors in each bin. In Figure~\ref{fig:reald}, we plot the relative error as a function of average degree for all three theoretical estimates. 
There are two main conclusions from this figure. First, for all three theoretical predictions, they are more precise on networks with large average degree than on networks with small average degree. 
Second, on networks in the regime of small average degree, $p_c^{(2)}$ in general
	has a better improvement from $p_c^{(1)}$ than on networks with relatively large average degree. 
It can be accounted as on dense networks, $p_c^{(0)}$ is already a good approximation for percolation threshold, and under such case $p_c^{(0)}$, $p_c^{(1)}$ and $p_c^{(2)}$ would be very close to each other.

At last, we discover the performance of theoretical predictions in different categories of networks. In Figure~\ref{fig:cate}, we give the average of relative errors in each group of networks for all three theoretical estimations. It can be found that theoretical predictions are closer to true percolation threshold in animal and social networks than in infrastructural and biological networks.

\begin{table}
\centering
\resizebox{\columnwidth}{!}{
\begin{threeparttable}
\caption{Percolation threshold for 42 real networks.}\label{Stas}
\footnotesize
\begin{tabular}{|c|c|c|c|c|c|c|}
\hline
\raisebox{-0.5ex}{Network} & \raisebox{-0.5ex}{Nodes} & \raisebox{-0.5ex}{Edges} & \raisebox{-0.5ex}{$p_c^{(0)}$} & \raisebox{-0.5ex}{$p_c^{(1)}$} & \raisebox{-0.5ex}{$p_c^{(2)}$}& \raisebox{-0.5ex}{$p_c$} \\
		\hline
		\hline
		\multicolumn{7}{|c|}{\raisebox{-0.5ex}{Animal Networks}} \\[0.5ex]
		\hline
		\raisebox{-0.5ex}{Zebra} & \raisebox{-0.5ex}{23} & \raisebox{-0.5ex}{105} & \raisebox{-0.5ex}{0.0814} & \raisebox{-0.5ex}{0.0889} & \raisebox{-0.5ex}{0.0969} & \raisebox{-0.5ex}{0.1146} \\[0.5ex]
		\hline
		\raisebox{-0.5ex}{Bison} & \raisebox{-0.5ex}{26} & \raisebox{-0.5ex}{222} & \raisebox{-0.5ex}{0.0544} & \raisebox{-0.5ex}{0.0577} & \raisebox{-0.5ex}{0.0608} & \raisebox{-0.5ex}{0.0651} \\[0.5ex]
		\hline
		\raisebox{-0.5ex}{Cattle} & \raisebox{-0.5ex}{28} & \raisebox{-0.5ex}{160} & \raisebox{-0.5ex}{0.0729} & \raisebox{-0.5ex}{0.0796} & \raisebox{-0.5ex}{0.0847} & \raisebox{-0.5ex}{0.0875} \\[0.5ex]
		\hline
		\raisebox{-0.5ex}{Sheep} & \raisebox{-0.5ex}{28} & \raisebox{-0.5ex}{235} & \raisebox{-0.5ex}{0.0548} & \raisebox{-0.5ex}{0.0581} & \raisebox{-0.5ex}{0.0609} & \raisebox{-0.5ex}{0.0656} \\[0.5ex]
		\hline
		\raisebox{-0.5ex}{Dolphins} & \raisebox{-0.5ex}{62} & \raisebox{-0.5ex}{159} & \raisebox{-0.5ex}{0.1390} & \raisebox{-0.5ex}{0.1668} & \raisebox{-0.5ex}{0.1791} & \raisebox{-0.5ex}{0.2717} \\[0.5ex]
		\hline
		\raisebox{-0.5ex}{Macaques} & \raisebox{-0.5ex}{62} & \raisebox{-0.5ex}{1167} & \raisebox{-0.5ex}{0.0256} & \raisebox{-0.5ex}{0.0263} & \raisebox{-0.5ex}{0.0268} & \raisebox{-0.5ex}{0.0308} \\[0.5ex]
		\hline
		\multicolumn{7}{|c|}{\raisebox{-0.5ex}{Social Networks}} \\[0.5ex]
		\hline
		\raisebox{-0.5ex}{Seventh Graders} & \raisebox{-0.5ex}{29} & \raisebox{-0.5ex}{250} & \raisebox{-0.5ex}{0.0532} & \raisebox{-0.5ex}{0.0564} & \raisebox{-0.5ex}{0.0592} & \raisebox{-0.5ex}{0.0633} \\[0.5ex]
		\hline
		\raisebox{-0.5ex}{Dutch College} & \raisebox{-0.5ex}{32} & \raisebox{-0.5ex}{422} & \raisebox{-0.5ex}{0.0367} & \raisebox{-0.5ex}{0.0381} & \raisebox{-0.5ex}{0.0395} & \raisebox{-0.5ex}{0.0432} \\[0.5ex]
		\hline
		\raisebox{-0.5ex}{Zachary Karate Club} & \raisebox{-0.5ex}{34} & \raisebox{-0.5ex}{78} & \raisebox{-0.5ex}{0.1487} & \raisebox{-0.5ex}{0.1889} & \raisebox{-0.5ex}{0.2097} & \raisebox{-0.5ex}{0.2310} \\[0.5ex]
		\hline
		\raisebox{-0.5ex}{Windsurfers} & \raisebox{-0.5ex}{43} & \raisebox{-0.5ex}{336} & \raisebox{-0.5ex}{0.0552} & \raisebox{-0.5ex}{0.0588} & \raisebox{-0.5ex}{0.0614} & \raisebox{-0.5ex}{0.0712} \\[0.5ex]
		\hline
		\raisebox{-0.5ex}{Train bombing} & \raisebox{-0.5ex}{64} & \raisebox{-0.5ex}{243} & \raisebox{-0.5ex}{0.0739} & \raisebox{-0.5ex}{0.0815} & \raisebox{-0.5ex}{0.0878} & \raisebox{-0.5ex}{0.1330} \\[0.5ex]
		\hline
		\raisebox{-0.5ex}{Hypertext 2009} & \raisebox{-0.5ex}{113} & \raisebox{-0.5ex}{2196} & \raisebox{-0.5ex}{0.0214} & \raisebox{-0.5ex}{0.0219} & \raisebox{-0.5ex}{0.0222} & \raisebox{-0.5ex}{0.0262} \\[0.5ex]
		\hline
		\raisebox{-0.5ex}{Physicians} & \raisebox{-0.5ex}{117} & \raisebox{-0.5ex}{465} & \raisebox{-0.5ex}{0.0994} & \raisebox{-0.5ex}{0.1138} & \raisebox{-0.5ex}{0.1176} & \raisebox{-0.5ex}{0.1421} \\[0.5ex]
		\hline
		\raisebox{-0.5ex}{Manufacturing Emails} & \raisebox{-0.5ex}{167} & \raisebox{-0.5ex}{3250} & \raisebox{-0.5ex}{0.0165} & \raisebox{-0.5ex}{0.0168} & \raisebox{-0.5ex}{0.0171} & \raisebox{-0.5ex}{0.0200} \\[0.5ex]
		\hline
		\raisebox{-0.5ex}{Jazz Musicians} & \raisebox{-0.5ex}{198} & \raisebox{-0.5ex}{2742} & \raisebox{-0.5ex}{0.0250} & \raisebox{-0.5ex}{0.0258} & \raisebox{-0.5ex}{0.0263} & \raisebox{-0.5ex}{0.0335} \\[0.5ex]
		\hline
		\raisebox{-0.5ex}{Residence Hall} & \raisebox{-0.5ex}{217} & \raisebox{-0.5ex}{1839} & \raisebox{-0.5ex}{0.0464} & \raisebox{-0.5ex}{0.0492} & \raisebox{-0.5ex}{0.0503} & \raisebox{-0.5ex}{0.0629} \\[0.5ex]
		\hline
		\raisebox{-0.5ex}{Haggle} & \raisebox{-0.5ex}{274} & \raisebox{-0.5ex}{2124} & \raisebox{-0.5ex}{0.0193} & \raisebox{-0.5ex}{0.0198} & \raisebox{-0.5ex}{0.0202} & \raisebox{-0.5ex}{0.0246} \\[0.5ex]
		\hline
		\raisebox{-0.5ex}{Network Science} & \raisebox{-0.5ex}{379} & \raisebox{-0.5ex}{914} & \raisebox{-0.5ex}{0.0964} & \raisebox{-0.5ex}{0.1148} & \raisebox{-0.5ex}{0.1294} & \raisebox{-0.5ex}{0.3862} \\[0.5ex]
		\hline
		\raisebox{-0.5ex}{Infectious} & \raisebox{-0.5ex}{410} & \raisebox{-0.5ex}{2765} & \raisebox{-0.5ex}{0.0428} & \raisebox{-0.5ex}{0.0450} & \raisebox{-0.5ex}{0.0465} & \raisebox{-0.5ex}{0.0807} \\[0.5ex]
		\hline
		\raisebox{-0.5ex}{Crime} & \raisebox{-0.5ex}{829} & \raisebox{-0.5ex}{1473} & \raisebox{-0.5ex}{0.1565} & \raisebox{-0.5ex}{0.2198} & \raisebox{-0.5ex}{0.2220} & \raisebox{-0.5ex}{0.2628} \\[0.5ex]
		\hline
		\raisebox{-0.5ex}{Email} & \raisebox{-0.5ex}{1133} & \raisebox{-0.5ex}{5451} & \raisebox{-0.5ex}{0.0482} & \raisebox{-0.5ex}{0.0519} & \raisebox{-0.5ex}{0.0529} & \raisebox{-0.5ex}{0.0642} \\[0.5ex]
		\hline
		\raisebox{-0.5ex}{Hamsterster Friendships} & \raisebox{-0.5ex}{1788} & \raisebox{-0.5ex}{12476} & \raisebox{-0.5ex}{0.0217} & \raisebox{-0.5ex}{0.0226} & \raisebox{-0.5ex}{0.0228} & \raisebox{-0.5ex}{0.0266} \\[0.5ex]
		\hline
		\raisebox{-0.5ex}{Hamsterster Full} & \raisebox{-0.5ex}{2000} & \raisebox{-0.5ex}{16098} & \raisebox{-0.5ex}{0.0200} & \raisebox{-0.5ex}{0.0207} & \raisebox{-0.5ex}{0.0210} & \raisebox{-0.5ex}{0.0258} \\[0.5ex]
		\hline
		\raisebox{-0.5ex}{Facebook} & \raisebox{-0.5ex}{2888} & \raisebox{-0.5ex}{2981} & \raisebox{-0.5ex}{0.0360} & \raisebox{-0.5ex}{0.1339} & \raisebox{-0.5ex}{0.1456} & \raisebox{-0.5ex}{0.2885} \\[0.5ex]
		\hline
		\multicolumn{7}{|c|}{\raisebox{-0.5ex}{Infrastructural Networks}} \\[0.5ex]
		\hline
		\raisebox{-0.5ex}{Contiguous USA} & \raisebox{-0.5ex}{49} & \raisebox{-0.5ex}{107} & \raisebox{-0.5ex}{0.1880} & \raisebox{-0.5ex}{0.2400} & \raisebox{-0.5ex}{0.2723} & \raisebox{-0.5ex}{0.3610} \\[0.5ex]
		\hline
		\raisebox{-0.5ex}{Euroroad} & \raisebox{-0.5ex}{1039} & \raisebox{-0.5ex}{1305} & \raisebox{-0.5ex}{0.2493} & \raisebox{-0.5ex}{0.3906} & \raisebox{-0.5ex}{0.4050} & \raisebox{-0.5ex}{0.5908} \\[0.5ex]
		\hline
		\raisebox{-0.5ex}{Air Traffic} & \raisebox{-0.5ex}{1226} & \raisebox{-0.5ex}{2408} & \raisebox{-0.5ex}{0.1086} & \raisebox{-0.5ex}{0.1336} & \raisebox{-0.5ex}{0.1387} & \raisebox{-0.5ex}{0.2022} \\[0.5ex]
		\hline
		\raisebox{-0.5ex}{Open Flights} & \raisebox{-0.5ex}{2905} & \raisebox{-0.5ex}{15645} & \raisebox{-0.5ex}{0.0159} & \raisebox{-0.5ex}{0.0163} & \raisebox{-0.5ex}{0.0165} & \raisebox{-0.5ex}{0.0193} \\[0.5ex]
		\hline
		\raisebox{-0.5ex}{US Power} & \raisebox{-0.5ex}{4941} & \raisebox{-0.5ex}{6594} & \raisebox{-0.5ex}{0.1336} & \raisebox{-0.5ex}{0.1606} & \raisebox{-0.5ex}{0.1766} & \raisebox{-0.5ex}{0.6518} \\[0.5ex]
		\hline
		\multicolumn{7}{|c|}{\raisebox{-0.5ex}{Biological Networks}} \\[0.5ex]
		\hline
		\raisebox{-0.5ex}{PDZBase} & \raisebox{-0.5ex}{161} & \raisebox{-0.5ex}{209} & \raisebox{-0.5ex}{0.1712} & \raisebox{-0.5ex}{0.2483} & \raisebox{-0.5ex}{0.2484} & \raisebox{-0.5ex}{0.5237} \\[0.5ex]
		\hline
		\raisebox{-0.5ex}{Caenorhabditis Elegans} & \raisebox{-0.5ex}{453} & \raisebox{-0.5ex}{2025} & \raisebox{-0.5ex}{0.0380} & \raisebox{-0.5ex}{0.0426} & \raisebox{-0.5ex}{0.0442} & \raisebox{-0.5ex}{0.0533} \\[0.5ex]
		\hline
		\raisebox{-0.5ex}{Protein} & \raisebox{-0.5ex}{1458} & \raisebox{-0.5ex}{1948} & \raisebox{-0.5ex}{0.1327} & \raisebox{-0.5ex}{0.1980} & \raisebox{-0.5ex}{0.2154} & \raisebox{-0.5ex}{0.3094} \\[0.5ex]
		\hline
		\raisebox{-0.5ex}{Human Protein} & \raisebox{-0.5ex}{1615} & \raisebox{-0.5ex}{3106} & \raisebox{-0.5ex}{0.0571} & \raisebox{-0.5ex}{0.0648} & \raisebox{-0.5ex}{0.0649} & \raisebox{-0.5ex}{0.0876} \\[0.5ex]
		\hline
		\raisebox{-0.5ex}{Protein Figeys} & \raisebox{-0.5ex}{2217} & \raisebox{-0.5ex}{6418} & \raisebox{-0.5ex}{0.0317} & \raisebox{-0.5ex}{0.0359} & \raisebox{-0.5ex}{0.0360} & \raisebox{-0.5ex}{0.0474} \\[0.5ex]
		\hline
		\raisebox{-0.5ex}{Protein Vidal} & \raisebox{-0.5ex}{2783} & \raisebox{-0.5ex}{6007} & \raisebox{-0.5ex}{0.0628} & \raisebox{-0.5ex}{0.0761} & \raisebox{-0.5ex}{0.0770} & \raisebox{-0.5ex}{0.0975} \\[0.5ex]
		\hline
		\multicolumn{7}{|c|}{\raisebox{-0.5ex}{Other Networks}} \\[0.5ex]
		\hline
		\raisebox{-0.5ex}{Corporate Leadership} & \raisebox{-0.5ex}{24} & \raisebox{-0.5ex}{86} & \raisebox{-0.5ex}{0.1167} & \raisebox{-0.5ex}{0.1339} & \raisebox{-0.5ex}{0.1430} & \raisebox{-0.5ex}{0.1512} \\[0.5ex]
		\hline
		\raisebox{-0.5ex}{Florida Ecosystem Dry} & \raisebox{-0.5ex}{128} & \raisebox{-0.5ex}{2106} & \raisebox{-0.5ex}{0.0249} & \raisebox{-0.5ex}{0.0257} & \raisebox{-0.5ex}{0.0260} & \raisebox{-0.5ex}{0.0308} \\[0.5ex]
		\hline
		\raisebox{-0.5ex}{Florida ecosystem Wet} & \raisebox{-0.5ex}{128} & \raisebox{-0.5ex}{2075} & \raisebox{-0.5ex}{0.0252} & \raisebox{-0.5ex}{0.0260} & \raisebox{-0.5ex}{0.0264} & \raisebox{-0.5ex}{0.0315} \\[0.5ex]
		\hline
		\raisebox{-0.5ex}{Little Rock Lake} & \raisebox{-0.5ex}{183} & \raisebox{-0.5ex}{2434} & \raisebox{-0.5ex}{0.0242} & \raisebox{-0.5ex}{0.0250} & \raisebox{-0.5ex}{0.0253} & \raisebox{-0.5ex}{0.0313} \\[0.5ex]
		\hline
		\raisebox{-0.5ex}{Unicode Languages} & \raisebox{-0.5ex}{614} & \raisebox{-0.5ex}{1245} & \raisebox{-0.5ex}{0.0637} & \raisebox{-0.5ex}{0.0811} & \raisebox{-0.5ex}{0.0853} & \raisebox{-0.5ex}{0.1059} \\[0.5ex]
		\hline
		\raisebox{-0.5ex}{Blogs} & \raisebox{-0.5ex}{1222} & \raisebox{-0.5ex}{16714} & \raisebox{-0.5ex}{0.0135} & \raisebox{-0.5ex}{0.0138} & \raisebox{-0.5ex}{0.0139} & \raisebox{-0.5ex}{0.0167} \\[0.5ex]
		\hline
		\raisebox{-0.5ex}{Bible} & \raisebox{-0.5ex}{1773} & \raisebox{-0.5ex}{7105} & \raisebox{-0.5ex}{0.0198} & \raisebox{-0.5ex}{0.0247} & \raisebox{-0.5ex}{0.0257} & \raisebox{-0.5ex}{0.0268} \\[0.5ex]
		\hline	
	\end{tabular}
	
	\end{threeparttable}}
\end{table}

\section{Discussion and Future Work}

\begin{figure}[t]
	\begin{center}
		\includegraphics[width=0.7 \linewidth,trim= 50 70 0 0]{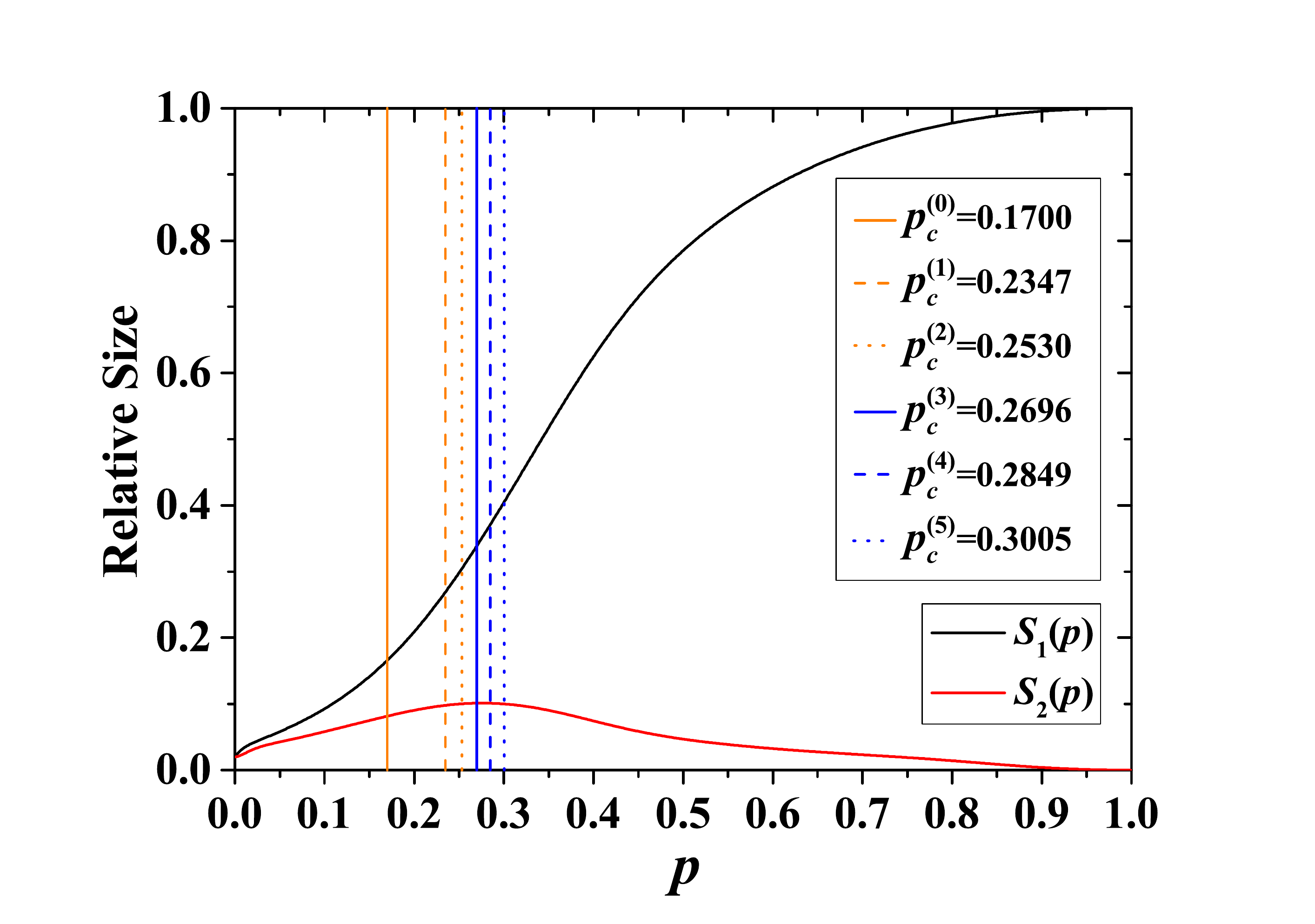}
	\end{center}
	\caption[kurzform]{Bond percolation on a Barab\'asi-Albert network with 50 nodes and mean degree 4. The empirical estimate is $p_c=0.2834$.}\label{high}{}
\end{figure}

Although $p_c^{(2)}$ outperforms other theoretical estimations for bond percolation threshold, it is still not sufficiently precise on some real networks, due to the appearance of cycles larger than 3.  
A direct idea to further improve the precision is to use the 
	reciprocal of the largest eigenvalue of higher order non-backtracking matrices as the predictor for the
	percolation threshold.
According to a similar deduction as in Section~\ref{sec:lowerbound}, applying a $g$-th-order non-backtracking matrix should avoid overestimating probability that node $i$ connecting to the giant component following a length-$g$ path. 
Moreover, Theorem~\ref{T0} also theoretically guarantees percolation threshold estimated by $g$-th-order non-backtracking matrix is non-decreasing with respect to $g$. 
In order to elaborate it empirically, we consider bond percolation on a small network with only 50 nodes, generated by the Barab\'asi-Albert network model~\cite{BaAl99}. In Figure~\ref{high}, we give the expected relative sizes of the first and second largest cluster as a function of probability $p$. In addition, we derive theoretical estimations for percolation threshold based on $g$-th-order non-backtracking matrix ($0\leq g\leq 5$), denoted separately as $p_c^{(g)}$. We can observe $p_c^{(3)}$ is a better approximation than $p_c^{(2)}$.

However, theoretical estimation based on high-order non-backtracking matrices still suffers the following two drawbacks. First, along with the increase of $g$, the size of the $g$-th-order non-backtracking matrix grows dramatically. 
It would lead to a prohibitive cost for computing the largest eigenvalue for large-scale networks. 
Second, with a sufficiently large $g$, $p_c^{(g)}$ could be larger than the true percolation threshold. 
As shown in Figure~\ref{high}, $p_c^{(4)}$ and $p_c^{(5)}$ are already slightly larger than the peak position of $S_2(p)$. It may be because the heuristic analysis in Section~\ref{sec:lowerbound} will be no longer hold for high-order non-backtracking matrices.
%
Thus, possible directions of future work include finding efficient techniques for generating and analyzing high-order non-backtracking matrices, and determining at which order $p_c^{(g)}$ is the closest to 
	but still smaller than the true percolation threshold.
It is also possible that we can depart from non-backtracking matrices and use some brand-new
	techniques to develop the theoretical assessment of the percolation threshold.

In addition, the current lower bound derivation is based on heuristic equations.
A more mathematically sound technique to show the effectiveness of high-order 
	non-backtracking matrices and the exact conditions that support this effectiveness
	is also a future direction.

\section{Acknowledgments}

The authors thank anonymous reviewers for their valuable comments and helpful suggestions. 
Yuan Lin and Zhongzhi Zhang are supported by the National Natural Science Foundation of China
	(Grant No. 11275049).
Wei Chen is partially supported by the National Natural Science Foundation of China (Grant
No. 61433014).




\bibliographystyle{abbrv}
\bibliography{percolation}  


\end{document}